\newtheorem{theorem}{Theorem}
\newtheorem{property}[theorem]{Property}
\newtheorem{corollary}[theorem]{Corollary}
\newtheorem{proposition}[theorem]{Proposition}
\theoremstyle{definition}
\newtheorem{example}[theorem]{Example}
\newtheorem{definition}[theorem]{Definition}
\newtheorem{remark}[theorem]{Remark}
\def\R{\mathbb{R}}
\def\Z{\mathbb{Z}}
\def\N{\mathbb{N}}
\def\Nz{\N_{\geq 2}}	
\def\Ddists{\mathcal{D}}
\def\PDdists{\mathcal{D}_0}
\def\supp{\operatorname{supp}}
\def\indsetset{\mathbb{I}}
\def\idnseqset{\mathbb{S}}
\def\Pr{\operatorname{Pr}}
\newcommand{\smi}[1]{\underline{#1}}	
\newcommand{\discoa}{\leq_{\text{disp}}^{\land\text{-disc}}}	
\newcommand{\stdCompj}{\leftrightharpoons}	
\newcommand{\stdCompja}{\leftrightharpoons_\land}	
\newcommand{\cdfis}{\; \hat{=} \;}		
\newcommand{\nn}[3]{\operatorname{NN}_{#2}^{#3}(#1)}	
\newcommand{\Comments}{1}
\newcommand{\mytodo}[2]{\ifnum\Comments=1%
	\todo[linecolor=#1!80!black,backgroundcolor=#1,bordercolor=#1!80!black]{#2}\fi}
\title{An easily verifiable dispersion order for discrete distributions}
\author{Andreas Eberl$^{1}$, Bernhard Klar$^{2}$, and Alfonso Su\'arez-Llorens$^{3}$ \\
  \small{$^{1}$ Institute of Statistics, Karlsruhe Institute of Technology (KIT), Germany,  
  \footnote{andreas.eberl@kit.edu}} \\
  \small{$^{2}$ Institute of Stochastics, Karlsruhe Institute of Technology (KIT), Germany, \footnote{bernhard.klar@kit.edu}} \\
  \small{$^{3}$ Facultad de Ciencias, Dpto. Estadística e IO., Universidad de Cádiz, Cádiz,
Spain.\footnote{alfonso.suarez@uca.es}}
}
\date{\today}
\begin{document}
\maketitle

\begin{abstract}
Dispersion is a fundamental concept in statistics, yet standard approaches - especially via stochastic orders - face limitations in the discrete setting. In particular, the classical dispersive order, well-established for continuous distributions, becomes overly restrictive for discrete random variables due to support inclusion requirements. To address this, we propose a novel weak dispersive order for discrete distributions. This order retains desirable properties while relaxing structural constraints, thereby broadening applicability. We further introduce a class of variability measures based on probability concentration, offering robust and interpretable alternatives that conform to classical axioms. Empirical illustrations highlight the practical relevance of this framework.
\end{abstract}

\smallskip
\noindent \textbf{Keywords.} Dispersion, variability, discrete distribution, univariate stochastic order,  L\'evy concentration function, robust dispersion measure.

\section{Introduction}\label{Sec1} 

The concept of dispersion - or variability - is a cornerstone of statistics, playing a pivotal role in both theoretical developments and applied methodologies. However, when we speak of ``dispersion'', it is essential to clarify what precisely is being measured and how this measurement is defined. In this context, two foundational questions arise: Given a random variable $X$ and a functional $\nu(X)$, under what conditions can $\nu(X)$ be considered a valid measure of dispersion? Furthermore, given two random variables $X$ and $Y$, when can we say that $X$ is more dispersed than $Y$?

The first question was systematically addressed by \cite{BiLe-1976,BiLe-1979}, who identified a set of core properties that any dispersion functional should satisfy. Specifically, a measure of dispersion $\nu(X)$ should be non-negative, i.e., $\nu(X) \geq 0$, with equality if $X$ is degenerate.
It should be translation invariant, so that $\nu(X + k) = \nu(X)$ for any constant $k$, and it should satisfy absolute homogeneity: $\nu(\lambda X) = |\lambda|\, \nu(X)$. A classical example fulfilling these conditions is the standard deviation. In some contexts, absolute homogeneity is relaxed to positive homogeneity, allowing rescaling only by positive factors.

The second question -- how to compare the dispersion of two random variables -- has been addressed through stochastic ordering. Several types of stochastic orders have been introduced for this purpose (see, e.g., \cite{shaked-sh-2006,mueller-stoyan}). Among these, the \emph{dispersive order}, denoted $\leq_{\text{disp}}$, introduced by Bickel and Lehmann and further developed by \cite{LeTo-1981}, plays a central role. According to this order, a random variable $X$ is more dispersed than $Y$ if the differences between corresponding quantiles of $X$ consistently exceed those of $Y$. This order is particularly strong, implying a range of dispersion-related comparisons and preserving important measures such as the standard deviation. For these reasons, \cite{BiLe-1979} and \cite{oja} viewed the dispersive order as a canonical tool for formalizing dispersion comparisons.

In the continuous case, the dispersive order has proven to be a powerful framework for comparing well-known distribution families and has been shown to preserve various classical dispersion measures, including the standard deviation, the Gini mean difference, differential entropy, and the $L_2$-norm of the density. These properties have made it valuable in fields such as finance, actuarial science, and reliability theory.

Although stochastic orderings for discrete distributions have generally received considerable attention in the literature (see, e.g.,  \cite{giowynn:2008, klenmat:2010}), the application of the dispersive order to such distributions presents serious limitations. As discussed in \cite{ek-2024} and formalized in Theorem 1.7.3 of \cite{mueller-stoyan}, a necessary condition for $F \leq_{\text{disp}} G$ is that the support of $F$ must be contained within that of $G$. While this condition is benign for continuous distributions, it becomes prohibitively restrictive in the discrete case. It effectively excludes most lattice distributions and empirical distributions with ties or differing sample sizes. These challenges stem from the discontinuities in the cumulative distribution function and the irregular spacing of the quantile levels, which complicate quantile-based comparisons in the discrete domain.

To address these difficulties, Eberl and Klar proposed a new dispersive order tailored to univariate discrete distributions. This refined order preserves many appealing features of the continuous dispersive order while adapting them to the discrete context. However, despite this advancement, the Eberl-Klar order remains relatively strong and may not accommodate all practical needs.

The goal of this paper is to go further by introducing a more flexible and weaker form of dispersion comparison for discrete random variables. Our approach builds on the notion of probability concentration originally formulated by \cite{levy37}, and we establish connections to classical concepts such as majorization, entropy, and randomness. Through a detailed study of the properties of this proposed order, we underscore the necessity for dispersion measures that are well-suited to the structure of discrete data.

This line of reasoning also motivates the introduction of a new family of dispersion measures specifically designed for discrete random variables. These measures fulfill the axiomatic framework of Bickel and Lehmann while offering the adaptability required to capture the 	peculiarities of discrete distributions. Altogether, our contribution offers a unified and interpretable framework for analyzing dispersion in discrete settings, enriching and extending existing methodologies.

The structure of the paper is as follows. In the next section, we introduce the weak dispersive order for discrete distributions. Section \ref{sec3} examines its main properties, while Section \ref{sec:disc-disp} demonstrates that the proposed order is strictly weaker than the discrete dispersive order of Eberl and Klar. In Section \ref{sec5}, we define new measures of variability based on the concentration function and analyze their properties. Finally, several empirical illustrations and concluding remarks are presented in Sections \ref{sec6} and \ref{sec7}.

\section{The weak dispersive order for discrete distributions} \label{sec2}

In \cite{fersur2003}, the authors introduced the weak dispersive order specifically for continuous distributions, providing several characterizations and analyzing its relationship with both the classical dispersive order and the majorization order. We recall its definition in what follows.

Let $X$ and $Y$ be two random variables with continuous distribution functions $F$ and $G$, respectively. The random variable $X$ is said to be less weakly dispersive than $Y$, denoted $X \leq_{\text{wd}} Y$ or equivalently $F \leq_{\text{wd}} G$, if for all $\varepsilon > 0$ the following inequality holds:
\[
\sup_{x_0} [F(x_0 + \varepsilon) - F(x_0)] \geq \sup_{y_0} [G(y_0 + \varepsilon) - G(y_0)].
\]
The authors also say that $X$ and $Y$ are equally dispersed in the weakly dispersive sense, denoted by $X =_{wd} Y$, if both $X \leq_{\text{wd}} Y$ and $Y \leq_{\text{wd}} X$ hold. It is easily seen that the relation $\leq_{\text{wd}}$ defines a partial order on the set of continuous distribution functions of real-valued random variables.

As mentioned above, \cite{fersur2003} considered only continuous distribution functions. In the continuous case, it is well known that
\[
\sup_{x_0} [F(x_0 + \varepsilon)-F(x_0)] = 
Q_X(\varepsilon) = \sup_{x_0} \Pr\{ X \in [x_0, x_0+\varepsilon] \},
\]
where $Q_X(\varepsilon)$ is the L\'evy concentration function, a widely used measure of probability concentration in the literature.

\begin{remark} 
	\begin{enumerate}
		\item[a)] Generally speaking, if $X \leq_{\text{wd}} Y$ holds, then whenever there exists an interval of length $\varepsilon$ in the support of $Y$, there also exists an interval of the same length in the support of $X$ that accumulates at least as much probability. The supremum is well-defined, since the expression $F(x + \varepsilon) - F(x)$ is bounded.
		\item[b)] 
		Intuitively, this reflects a different perspective compared to the classical dispersive order. In the classical setting, we fix a probability level $q - p$ and search for an interval that is more widely separated. Here, we fix the interval length $\varepsilon$ and look for an interval that accumulates more probability.
		\item[c)] 
		In the general case (i.e., not necessarily continuous distributions), the supremum may not be attained at a point in the support due to the right-continuity of $F$. However, this technicality is not crucial for the interpretation or application of the order.
	\end{enumerate}
\end{remark}

In the discrete case, a natural way to define the weak dispersive order is based directly on the L\'evy concentration function, as follows:

\begin{definition} \label{def2}
	Let $X$ and $Y$ be two discrete random variables with distribution functions $F$ and $G$, respectively. Then $X$ is said to be less weakly dispersive than $Y$, denoted $X \leq_{\text{wd}} Y$, or equivalently $F \leq_{\text{wd}} G$, if for all $\varepsilon > 0$, it holds that
	\begin{align*}
		Q_{X}(\varepsilon) = \sup_{x_0} \Pr\{ X \in [x_0, x_0 + \varepsilon] \} 
		\ \geq\ 
		Q_{Y}(\varepsilon) = \sup_{y_0} \Pr\{ Y \in [y_0, y_0 + \varepsilon] \}.
	\end{align*}
\end{definition}

It is well known that the concentration function $Q_X(\varepsilon)$ is right-continuous and non-decreasing in $\varepsilon$. Moreover, it is easy to verify that $Q_X(\varepsilon) \to 1$ as $\varepsilon \to \infty$. Finally, $Q_X(0)$ can be defined as the limit from the right:  
\[
Q_X(0) := \lim_{\varepsilon \to 0^+} Q_X(\varepsilon) = \sup \{ p_i \},
\]  
where $p_i = \Pr[X = x_i], i=1,2,\ldots,$ denotes the point masses of $X$.

Definition \ref{def2} can also be expressed in a more constructive form. Let $X$ and $Y$ be two discrete random variables with supports $R_X$ and $R_Y$, and associated probability masses $p_i = \Pr[X = x_i]$ and $q_i = \Pr[Y = y_i]$, respectively. The supports may be finite or countably infinite, and they may differ both in cardinality and in the location of their support points.

Given two support points $x_i, x_j \in R_X$, denote their distance by $d_{i,j}^X = |x_j - x_i|$. Analogously, define $d_{i,j}^Y = |y_j - y_i|$ for $y_i, y_j \in R_Y$.
Then, $X \leq_{\text{wd}} Y$ if and only if, for every $\varepsilon > 0$,
\begin{align} \label{oper}
	\max_{\substack{i,j: d_{i,j}^X \leq \varepsilon}} \sum_{k: x_i \leq x_k \leq x_j} p_k
	\ \geq\ 
	\max_{\substack{i,j: d_{i,j}^Y \leq \varepsilon}} \sum_{k: y_i \leq y_k \leq y_j} q_k.
\end{align}

For two lattice distributions on $\mathbb{N}_0$, we have $X \leq_{\text{wd}} Y$ if and only if, for all $m \in \mathbb{N}_0$,
\begin{align} \label{lat1}
	\sup_{i \in \mathbb{N}_0} \left\{ \sum_{k=i}^{i+m} p_k \right\} \geq 
	\sup_{j \in \mathbb{N}_0} \left\{ \sum_{k=j}^{j+m} q_k \right\}.
\end{align}

For any $i_0$ such that $\sum_{k=i_0}^{\infty} p_k < \max_k \{p_k\}$, we can replace the supremum over $i \in \mathbb{N}_0$ by a maximum over the finite set $\{0, \ldots, i_0\}$. The same holds for $Y$ with a corresponding index $j_0$. Setting $l_0 = \max\{i_0, j_0\}$, the condition in \eqref{lat1} becomes equivalent to
\begin{align} \label{lat2}
	\max_{i \in \{0, \ldots, l_0\}} \left\{ \sum_{k=i}^{i+m} p_k \right\} \geq 
	\max_{j \in \{0, \ldots, l_0\}} \left\{ \sum_{k=j}^{j+m} q_k \right\}.
\end{align}

Hence, the supremum in \eqref{lat1} is actually attained. The same argument applies to distributions on $\mathbb{Z}$ and to lattice distributions with arbitrary step size.

\section{Properties of the weak dispersive order} \label{sec3}

This section outlines several properties of the weak dispersive order. We begin with two elementary observations:

\begin{remark}
	\begin{enumerate}
		\item[a)] Suppose there exists a fixed $\varepsilon > 0$ that is smaller than the distance between any two distinct points in the support of both $X$ and $Y$; that is, $\varepsilon < d_{i,j}^X$ and $\varepsilon < d_{l,r}^Y$ for all $i < j$ and $l < r$. In this case, the condition $X \leq_{\text{wd}} Y$ implies that
		$\max \{p_i\} \geq \max \{q_i\}$
		This situation always occurs when both $X$ and $Y$ have finite supports.
		\item[b)] 
		The supremum defining the Lévy concentration function may be attained at multiple intervals. Since the set of rational numbers is dense in $\mathbb{R}$, it suffices to evaluate the weak dispersive order on a countable collection of $\varepsilon$ values. That is, the order relation can be verified by considering only a countable subset of positive $\varepsilon$.
	\end{enumerate}
\end{remark}

We next present a consequence of the weak dispersive order in terms of the range of the support:

\begin{property} \label{prop1}
	Let $X$ and $Y$ be discrete random variables with bounded supports. Denote the minimum and maximum elements of the support of $X$ by $x_l$ and $x_u$, respectively, and those of $Y$ by $y_l$ and $y_u$. Then, if $X \leq_{\text{wd}} Y$, it follows that
	\[
	x_u - x_l \leq y_u - y_l.
	\]
\end{property}

\begin{proof}
	The proof follows by contradiction, setting $\varepsilon = y_u - y_l$.
\end{proof}

The following result allows many comparisons. First, we recall the definition of a contraction function.
\begin{definition}
	Let $\phi: \mathbb{R} \mapsto \mathbb{R}$ be a real function. We will say that $\phi$ is a contraction if 
	$|\phi(y) -\phi(x)| \leq |y-x|$, for all $x, y\in \mathbb{R}$.
\end{definition}

\begin{theorem} \label{contrac}
	Let $X$ and $Y$ be two random variables such that $X=_{st} \phi(Y)$, where $\phi$ is a monotone contraction function. Then $X\leq_{\text{wd}}Y$. 
\end{theorem}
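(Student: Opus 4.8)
The plan is to reduce everything to the Lévy concentration function and to exploit that a contraction is Lipschitz with constant $1$. Since $X =_{st} \phi(Y)$, the distributions of $X$ and $\phi(Y)$ coincide, so $Q_X(\varepsilon) = Q_{\phi(Y)}(\varepsilon)$ for every $\varepsilon > 0$. By Definition \ref{def2} it therefore suffices to show that $Q_{\phi(Y)}(\varepsilon) \geq Q_Y(\varepsilon)$ for all $\varepsilon > 0$; intuitively, every interval $[y_0, y_0+\varepsilon]$ carrying probability mass $p$ under $Y$ must be matched by an interval of length at most $\varepsilon$ carrying at least mass $p$ under $\phi(Y)$.

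Fix $\varepsilon > 0$ and an arbitrary $y_0 \in \mathbb{R}$, and set $I = [y_0, y_0+\varepsilon]$. The first step is the event identity $\{Y \in I\} = \{\phi(Y) \in \phi(I)\}$: the inclusion ``$\subseteq$'' is immediate, while ``$\supseteq$'' uses that $\phi$, being strictly monotone, is injective, so $\phi(y) \in \phi(I)$ forces $y \in I$. Hence $\Pr\{Y \in I\} = \Pr\{\phi(Y) \in \phi(I)\}$.

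The second step is to locate $\phi(I)$. A contraction is continuous, so $\phi(I)$ is a compact interval, and for any $y, y' \in I$ we have $|\phi(y) - \phi(y')| \leq |y - y'| \leq \varepsilon$, so $\phi(I)$ has length at most $\varepsilon$. Writing $a := \min \phi(I)$ (which is $\phi(y_0)$ or $\phi(y_0+\varepsilon)$ according to whether $\phi$ is increasing or decreasing), we get $\phi(I) \subseteq [a, a+\varepsilon]$, and therefore
\[
\Pr\{Y \in I\} = \Pr\{\phi(Y) \in \phi(I)\} \leq \Pr\{\phi(Y) \in [a, a+\varepsilon]\} \leq Q_{\phi(Y)}(\varepsilon).
\]
Since $y_0$ was arbitrary, taking the supremum over $y_0$ gives $Q_Y(\varepsilon) \leq Q_{\phi(Y)}(\varepsilon) = Q_X(\varepsilon)$, which is exactly $X \leq_{\text{wd}} Y$.

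I do not expect a serious obstacle; the only points requiring care are the event identity in the first step — where strict monotonicity (injectivity) is used, although the inclusion ``$\subseteq$'' alone, which is all the probability estimate needs, would follow from a weaker monotone-contraction hypothesis — and the bookkeeping that the image interval $\phi(I)$, while possibly shorter than $\varepsilon$, still lies inside a translate of $[0,\varepsilon]$, so that it is dominated by $Q_{\phi(Y)}(\varepsilon)$ and not merely by $Q_{\phi(Y)}$ evaluated at the (smaller) length of $\phi(I)$.
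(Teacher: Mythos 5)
Your proof is correct and follows essentially the same route as the paper's: both arguments rest on the observation that a strictly monotone contraction maps an interval of length $\varepsilon$ into an interval of length at most $\varepsilon$, so the mass of $Y$ on $[y_0,y_0+\varepsilon]$ is dominated by $Q_{\phi(Y)}(\varepsilon)=Q_X(\varepsilon)$, and one concludes by taking the supremum. The only (cosmetic) difference is that by writing the image interval as $[a,a+\varepsilon]$ with $a=\min\phi(I)$ you treat the increasing and decreasing cases uniformly, whereas the paper spells out the increasing case and defers the decreasing one to an analogous argument.
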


\begin{proof}
	Let us suppose first that $\phi$ is a non-decreasing contraction. Then
	\begin{eqnarray*} 
		Pr \left \{ Y \in \lbrack x_{0}, x_{0}+\varepsilon \rbrack \right \} & \leq & Pr \left \{ \phi(Y) \in \lbrack \phi(x_{0}), \phi(x_{0}+\varepsilon) \rbrack \right \}  \\
		& = &  Pr \left \{ X \in \lbrack \phi(x_{0}), \phi(x_{0}+\varepsilon) \rbrack \right \} \\
		& \leq & Pr \left \{ X \in \lbrack \phi(x_{0}), \phi(x_{0})+\varepsilon \rbrack \right \}.
	\end{eqnarray*}
	where the first inequality follows by just taking in account that $\phi$ is non-decreasing,  and the second equality by considering that $X=_{st} \phi(Y)$. For the last inequality, we use both assumptions $\phi$ is a non-decreasing function and $\phi$ is a contraction fuction. Then  $|\phi(x_0+\varepsilon) - \phi(x_0)| = \phi(x_0+\varepsilon) - \phi(x_0) \leq \varepsilon$, which leads that  $\lbrack \phi(x_{0}), \phi(x_{0}+\varepsilon) \rbrack \subseteq \lbrack \phi(x_{0}), \phi(x_{0})+\varepsilon \rbrack$. Therefore, the probabilities are ordered. The proof easily concludes by considering the supremum. The proof for a non-increasing contraction function follows a similar 
	argument. 
\end{proof}
\begin{remark}
	The inverse of a strictly monotone contraction function is an expansion function, i.e.,  a function where two images are more widely separated than the corresponding values of the images. Therefore if $\phi(X)=_{st} Y$ for a strictly monotone expansion function then $X\leq_{\text{wd}}Y$ holds.
\end{remark}
\begin{corollary} \label{cordisp}
	If $X \leq_{\text{disp}}  Y$ in the classical Lewis and Thompson sense, then $X\leq_{\text{wd}}Y$ holds.
\end{corollary}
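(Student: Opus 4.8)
The plan is to reduce the claim to (a slight extension of) Theorem~\ref{contrac}. Recall the classical characterization of the dispersive order (see, e.g., \cite{mueller-stoyan,shaked-sh-2006}): $X \leq_{\text{disp}} Y$ holds if and only if there is a nondecreasing contraction $\phi:\mathbb{R}\to\mathbb{R}$ with $X =_{st} \phi(Y)$. One way to see the nontrivial direction is to put $\hat X = F^{-1}(U)$ and $\hat Y = G^{-1}(U)$ for a single $U$ uniform on $(0,1)$; the dispersive ordering is exactly the statement that $G^{-1}-F^{-1}$ is nondecreasing. Whenever $\hat Y$ is constant on a subinterval of $(0,1)$, $\hat X$ there equals $\hat Y$ minus the nondecreasing quantity $G^{-1}-F^{-1}$, hence is nonincreasing; being also nondecreasing, it is constant, so $\hat X$ is a nondecreasing $1$-Lipschitz function $\phi$ of $\hat Y$, which one extends off $\supp(Y)$ as a nondecreasing contraction. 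Thus the first step is to fix such a $\phi$.

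The second step repeats the computation in the proof of Theorem~\ref{contrac}, but replaces the event identity used there (which needs $\phi$ strictly monotone) by a mere inclusion. Since $\phi$ is nondecreasing, $Y\in[x_0,x_0+\varepsilon]$ implies $\phi(Y)\in[\phi(x_0),\phi(x_0+\varepsilon)]$, hence
\[
\Pr\{Y\in[x_0,x_0+\varepsilon]\}\ \leq\ \Pr\{\phi(Y)\in[\phi(x_0),\phi(x_0+\varepsilon)]\}\ =\ \Pr\{X\in[\phi(x_0),\phi(x_0+\varepsilon)]\}.
\]
Because $\phi$ is a nondecreasing contraction, $\phi(x_0+\varepsilon)-\phi(x_0)\leq\varepsilon$, so $[\phi(x_0),\phi(x_0+\varepsilon)]\subseteq[\phi(x_0),\phi(x_0)+\varepsilon]$ and the right-hand side is at most $\Pr\{X\in[\phi(x_0),\phi(x_0)+\varepsilon]\}\leq Q_X(\varepsilon)$. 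Taking the supremum over $x_0$ gives $Q_Y(\varepsilon)\leq Q_X(\varepsilon)$ for every $\varepsilon>0$, i.e.\ $X\leq_{\text{wd}}Y$.

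The only genuine obstacle is the mismatch between the hypothesis of Theorem~\ref{contrac} (strictly monotone contraction) and what $\leq_{\text{disp}}$ supplies (a possibly locally constant nondecreasing contraction --- e.g.\ when $Y$ has strictly more support points than $X$, the canonical $\phi$ must be flat on some intervals). The resolution, as above, is to notice that the weak dispersive comparison only requires the set inclusion $\{Y\in[x_0,x_0+\varepsilon]\}\subseteq\{\phi(Y)\in[\phi(x_0),\phi(x_0+\varepsilon)]\}$, for which monotonicity of $\phi$ alone suffices. If one prefers to avoid $\phi$ entirely, a self-contained variant works directly with quantiles: fixing $\varepsilon>0$, take support points $y_i\leq y_j$ of $Y$ with $y_j-y_i\leq\varepsilon$ realizing $Q_Y(\varepsilon)$, pick levels $0<\alpha<\beta<1$ with $G^{-1}(\alpha)=y_i$, $G^{-1}(\beta)=y_j$ and $\beta-\alpha$ as close as desired to $Q_Y(\varepsilon)$, and combine $F^{-1}(\beta)-F^{-1}(\alpha)\leq G^{-1}(\beta)-G^{-1}(\alpha)=y_j-y_i\leq\varepsilon$ with $\Pr\{X\in[F^{-1}(\alpha),F^{-1}(\beta)]\}\geq\beta-\alpha$; this exhibits an interval of length at most $\varepsilon$ carrying $X$-mass at least $\beta-\alpha$, and letting $\beta-\alpha\to Q_Y(\varepsilon)$ yields $Q_X(\varepsilon)\geq Q_Y(\varepsilon)$.
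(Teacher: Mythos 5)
Your proof is correct and follows essentially the same route as the paper, which likewise obtains the result by combining the M\"uller--Stoyan characterization (Theorem 1.7.4) with the contraction computation of Theorem~\ref{contrac}. You go one step further in explicitly noting and repairing the mismatch between the \emph{strictly} monotone contraction assumed in Theorem~\ref{contrac} and the merely nondecreasing contraction that the characterization actually delivers --- replacing the event identity by the inclusion $\{Y\in[x_0,x_0+\varepsilon]\}\subseteq\{\phi(Y)\in[\phi(x_0),\phi(x_0+\varepsilon)]\}$, which still yields the required inequality --- a point the paper's one-line proof glosses over.
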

\begin{proof}
	The proof follows directly from using jointly Theorem 1.7.4 of Müller and Stoyan (2002) and Theorem \ref{contrac}.    
\end{proof}

\begin{remark}
	Note, however, that the classical dispersive order is incompatible with almost all discrete distributions, including lattice and most empirical distributions \citep{ek-2024}.
	The following simple example shows that the weak dispersive order is weaker than the classical dispersive order: Let $U\{1,n\}$ denote the uniform distribution on $\{1,\ldots,n\}$. 
	Obviously, $U\{1,m\} \leq_{\text{wd}} U\{1,n\}$ for $m\leq n$, but they are not generally ordered with respect to the dispersive order in the L-T sense. For example, $U\{1,2\}$ and $U\{1,5\}$ are not ordered with respect to $\leq_{\text{disp}}$.
\end{remark}

As another example, we compare Bernoulli distributions.

\begin{example}
	Let $X \sim Be(p_1)$ and $Y \sim Be(p_2)$ be two Bernoulli random variables. A straightforward calculation shows that $X \leq_{\text{wd}} Y$ if and only if $\max\{p_1, 1-p_1\} \geq \max\{p_2, 1-p_2\}$. Then, $ Be(p) \leq_{\text{wd}} Be(0.5)$, for all $p\in [0,1]$.  
	
	On the other hand, it follows directly from Prop. 2.5 in \cite{ek-2024} that neither $X \leq_{\text{disp}} Y$ nor $Y \leq_{\text{disp}} X$ holds for $p_1\neq p_2$.
\end{example}

The next result directly follows from Theorem \ref{contrac}.

\begin{corollary} \label{cor:affin}
	Given a discrete random variable $X$, we obtain that
	\begin{enumerate}
		\item[a)] $X \leq_{\text{wd}}
		(\geq_{wd})\,  aX+b$, $\forall |a| >(<)\, 1$, $b\in \mathbb{R}$. 
		\item[b)] $X =_{wd} -X + b =_{wd} X + b$, $\forall b \in \mathbb{R}$.
	\end{enumerate}
\end{corollary}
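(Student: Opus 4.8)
The plan is to derive Corollary \ref{cor:affin} as a direct application of Theorem \ref{contrac} together with the equality case already implicit in its proof. For part (a), consider the affine map $\phi(x) = (x-b)/a$. When $|a| > 1$, this map is a strictly monotone contraction, since $|\phi(y) - \phi(x)| = |y-x|/|a| \leq |y-x|$; moreover $\phi(aX+b) =_{st} X$, so Theorem \ref{contrac} yields $X \leq_{\text{wd}} aX+b$. The parenthetical claim follows by symmetry: when $|a| < 1$, the map $x \mapsto ax+b$ is itself a strictly monotone contraction and sends $X$ to $aX+b$, so Theorem \ref{contrac} gives $aX+b \leq_{\text{wd}} X$, i.e. $X \geq_{\text{wd}} aX+b$.

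For part (b), I would first note that translations $\phi(x) = x + b$ are isometries, hence (trivially) contractions in both directions, so Theorem \ref{contrac} applied to $\phi$ and to $\phi^{-1}$ gives $X =_{wd} X+b$. The same reasoning handles the reflection $\phi(x) = -x + b$: it satisfies $|\phi(y) - \phi(x)| = |y-x|$, so it is a strictly decreasing contraction with a strictly decreasing contraction inverse, and Theorem \ref{contrac} applied both ways yields $X =_{wd} -X + b$. Chaining these equalities gives the full chain $X =_{wd} -X + b =_{wd} X + b$. Alternatively, and perhaps more transparently, one can observe directly from Definition \ref{def2} that $Q_{X+b}(\varepsilon) = Q_X(\varepsilon)$ and $Q_{-X+b}(\varepsilon) = Q_X(\varepsilon)$ for every $\varepsilon > 0$, since shifting or reflecting the support does not change the supremum of probability mass over intervals of a fixed length; this makes part (b) immediate without invoking the theorem.

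The only subtlety worth flagging is the boundary behaviour in part (a): Theorem \ref{contrac} as stated requires a \emph{strictly} monotone contraction, which is why the cases $|a| = 1$ are excluded from the strict statement of (a) and instead absorbed into the equality of (b). One should also be mild about the degenerate case $a = 0$ (where $aX + b$ is a point mass and the relation still holds trivially, but $\phi$ is not invertible) — since the hypothesis $|a| > 1$ or $|a| < 1$ with $a \neq 0$ is in force, this does not arise, but it is the kind of edge case a referee might probe. Beyond that, there is no real obstacle: the corollary is purely a matter of selecting the right contraction and reading off the conclusion, so the write-up can be kept to a few lines.
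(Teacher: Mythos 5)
Your proof is correct and follows exactly the route the paper intends: the paper gives no written proof beyond the remark that the corollary ``directly follows from Theorem~\ref{contrac}'', and your argument simply carries out that application with the affine contractions $\phi(x)=(x-b)/a$ (for $|a|>1$), $\phi(x)=ax+b$ (for $|a|<1$), and the isometries $x\mapsto x+b$, $x\mapsto -x+b$ applied in both directions for part~(b). The observation that $Q_{X+b}=Q_{-X+b}=Q_X$ and the handling of the edge cases $|a|=1$ and $a=0$ are correct but add nothing beyond what the theorem already delivers.
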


\begin{remark} \label{rem:equiv} 
	Corollary \ref{cor:affin} points to an important difference between the weak and the usual dispersive order. The equivalence class of a distribution with cdf $F$ with respect to $=_{disp}$ is given by all shifts of $F$, i.e.\ $\{F(\cdot - b): b \in \R\}$ \citep[see][p.\ 157]{oja}, so distributions that are equivalent with respect to dispersion can only differ in location. Consequently, $-X=_{disp} X$ does not hold in general. The same holds for its discrete counterpart considered in Section \ref{sec:disc-disp}.
	
	Corollary \ref{cor:affin} b) shows that, besides shifts, also reflections belong to the equivalence class of a distribution with respect to $=_{wd}$. But this is not the complete equivalence class, as the following example shows. Consider random variables $X$ and $Y$ with support $\{1,\ldots,5\}$ and probability mass functions $p=(0.1,0.4,0.05,0.3,0.15)$ and $q=(0.4,0.1,0.25,0.15,0.1)$, respectively. Then, $X=_{wd} Y,$ but $X$ and $Y$ are not location shifts of one another.
\end{remark}

The following proposition and its proof closely follow \cite{levy37} and hold for all types of random variables—discrete, continuous, or otherwise.

\begin{proposition}\label{prop1b}
	Let $X$ and $Y$ be two independent random variables. 
	Then,
	\begin{eqnarray*}
		Q_{X+Y}(\varepsilon) \leq Q_{X}(\varepsilon), \, \forall \varepsilon > 0.
	\end{eqnarray*}
\end{proposition}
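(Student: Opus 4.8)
The plan is to exploit the fact that $Q_X(\varepsilon)$ is the supremum over intervals of length $\varepsilon$ of the mass $X$ puts in them, combined with a conditioning argument on $Y$. Fix $\varepsilon>0$. For any real $x_0$, I want to show $\Pr\{X+Y\in[x_0,x_0+\varepsilon]\}\le Q_X(\varepsilon)$; taking the supremum over $x_0$ then gives the claim. Conditioning on $Y=y$ (more precisely, integrating against the distribution of $Y$, using independence), we have
\[
\Pr\{X+Y\in[x_0,x_0+\varepsilon]\} = \int \Pr\{X\in[x_0-y,\,x_0-y+\varepsilon]\}\, dF_Y(y).
\]
For each fixed $y$, the inner probability is the mass $X$ assigns to an interval of length $\varepsilon$, hence is bounded above by $Q_X(\varepsilon)$ by the very definition of the concentration function. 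Since this bound is uniform in $y$, the integral is at most $Q_X(\varepsilon)\int dF_Y(y) = Q_X(\varepsilon)$.

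The key steps, in order, are: (i) fix $\varepsilon>0$ and an arbitrary $x_0$; (ii) write the event probability for $X+Y$ as an integral over the law of $Y$ via independence — this is just Fubini/the law of total probability and needs no regularity assumptions, which is why the statement holds for all types of random variables; (iii) bound the integrand pointwise by $Q_X(\varepsilon)$; (iv) integrate and then take the supremum over $x_0$ on the left-hand side. A symmetric argument conditioning on $X$ instead of $Y$ also yields $Q_{X+Y}(\varepsilon)\le Q_Y(\varepsilon)$, so in fact $Q_{X+Y}(\varepsilon)\le\min\{Q_X(\varepsilon),Q_Y(\varepsilon)\}$, though the statement only asks for the bound by $Q_X(\varepsilon)$.

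I do not expect a genuine obstacle here; the only point requiring a little care is the measurability of $y\mapsto\Pr\{X\in[x_0-y,x_0-y+\varepsilon]\}$ so that the integral is well-defined, but this follows since $F_X$ is a right-continuous function of bounded variation and the map is a fixed combination of its values. A second minor subtlety is that the supremum defining $Q_X(\varepsilon)$ need not be attained (as noted in the Remark following Definition \ref{def2}), but this is irrelevant: we only use that $Q_X(\varepsilon)$ is an upper bound for every interval of length $\varepsilon$, which holds by definition of the supremum regardless of attainment. So the proof is essentially the displayed two-line computation plus the remark that the supremum over $x_0$ transfers the inequality to $Q_{X+Y}(\varepsilon)$.
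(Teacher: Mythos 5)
Your proof is correct and follows essentially the same route as the paper's: write the probability for $X+Y$ as a convolution integral over the law of $Y$, bound the integrand by $Q_X(\varepsilon)$ since each inner interval has length $\varepsilon$, integrate, and take the supremum over $x_0$. The additional remarks on measurability and on the symmetric bound by $Q_Y(\varepsilon)$ are fine but not needed beyond what the paper already does.
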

So $X+Y\geq_{wd}X$ holds.

\begin{proof}
	By considering the convolution, we obtain that
	\begin{eqnarray*}
		\mbox{Pr} \{ X+Y \in 
		\lbrack x_0, x_0+\varepsilon \rbrack \} & = &  \int_\mathbb{R} 
		\mbox{Pr} \{ X \in \lbrack x_0-y, x_0 - y+\varepsilon \rbrack
		\} dF_{Y}(y) \\
		& \leq & \int_\mathbb{R} 
		Q_{X}(\varepsilon) dF_{Y}(y) \\
		& = & Q_{X}(\varepsilon), \, \forall \varepsilon >0.
	\end{eqnarray*} 
	where the first inequality follows from the fact that the interval has length $\varepsilon$. Taking the supremum, the proof follows easily.
\end{proof}

\begin{example} \label{ex:convolution}
	By Prop. \ref{prop1b}, many well-known distributions can be ordered with respect to $\geq_{wd}$.
	\begin{enumerate}
		\item[a)]
		Let $X$ and $Y$ have Poisson distributions with parameters $\lambda$ and $\mu$. Then, $X \leq_{\text{wd}} Y$ if and only if $\lambda\leq\mu$.
		\item[b)]
		Let $X$ and $Y$ have binomial distributions $Bin(m,p)$ and $Bin(n,p)$. Then, $X \leq_{\text{wd}} Y$ if and only if $m\leq n$.
		\item[c)]
		Let $X$ and $Y$ have negative binomial distributions $NB(r,p)$ and $NB(s,p)$. Then, $X \leq_{\text{wd}} Y$ if and only if $r\leq s$.
		\item[d)]
		As an example of distributions with an arbitrary number of modes, consider the Hermite distributions $Herm(a,b), a,b>0$: if $U$ and $V$ are independent Poisson random variables with parameters $a$ and $b$, then $Z=U+2V \sim  Herm(a,b)$. Now, if $X \sim Herm(a_1,b_1), Y\sim Herm(a_2,b_2)$, and $X$ and $Y$ are independent, then $X+Y\sim Herm(a_1+a_2,b_1+b_2)$, where $a_1,a_2,b_1,b_2>0$.
		It follows that $Herm(a,b) \leq_{\text{wd}} Herm(c,d)$ if $a\leq c$ and $b\leq d$. It is worth recalling that the Hermite discrete random variable can model overdispersion in count data, (see, e.g., \cite{johnson:2005, kemp:1965}).
		
		As two examples, Figure \ref{fig:Hermite} shows the probability mass functions of three Hermite distributed random variables $X,$ $Y$ and $Z$ with $X <_{wd} Y <_{wd} Z$. 
		In the left panel, $a_1=a_2=0.10, a_3=0.15, b_1=0.10, b_2=b_3=0.15$; 
		in the right panel, $a_1=a_2=0.10, a_3=0.15, b_1=1.0, b_2=b_3=1.1$.
		
		\begin{figure}[ht]
			\centering
			\includegraphics[scale=0.5]{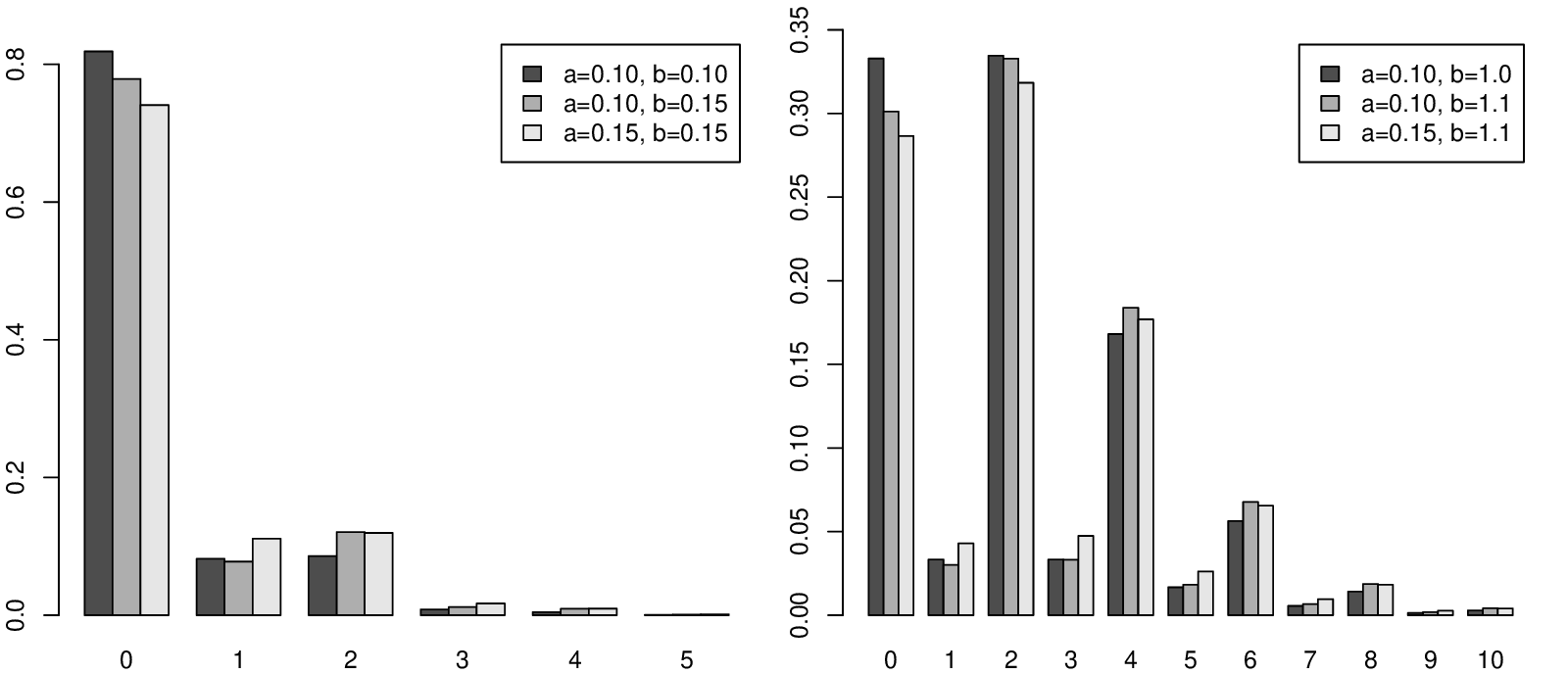}
			\caption{Both panels show probability mass functions of three Hermite distributed random variables $X$ (black), $Y$ (darkgrey) and $Z$ (lightgrey) with $X <_{wd} Y <_{wd} Z$} \label{fig:Hermite}
		\end{figure}
	\end{enumerate}
\end{example}

When the probability mass functions of two discrete distributions are decreasing, the following result can help to establish the weak dispersive order between the two distributions.

\begin{proposition} \label{prop-decreasing}
	Let $X$ and $Y$ be two random variables on $\mathbb{N}$ with decreasing probability mass functions. Further, assume $X\le_{st} Y$, where $\le_{st}$ denotes the usual stochastic order. Then, $X \leq_{\text{wd}} Y$ holds.
\end{proposition}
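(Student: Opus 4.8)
The plan is to turn both Lévy concentration functions into explicit values of the respective cdfs, after which the statement becomes an immediate consequence of the stochastic order.

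First I would invoke the lattice reformulation \eqref{lat1}: since $X$ and $Y$ are supported on the integer lattice $\mathbb{N}$, it suffices to prove that for every $m\in\mathbb{N}_0$,
\[
\sup_{i}\ \sum_{k=i}^{i+m} p_k \ \geq\ \sup_{j}\ \sum_{k=j}^{j+m} q_k ,
\]
where $m=\lfloor\varepsilon\rfloor$ corresponds to the interval length, because a closed interval of length $\varepsilon$ meets at most $m+1$ consecutive lattice points and this bound is attained (e.g.\ by an interval with integer left endpoint).

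Second I would use the monotonicity of the pmfs to evaluate these suprema in closed form. Since $p_1\geq p_2\geq\cdots$, for every starting index $i$ the termwise inequalities $p_{i+t}\leq p_{1+t}$, $t=0,\ldots,m$, give $\sum_{k=i}^{i+m}p_k\leq\sum_{k=1}^{m+1}p_k$; hence the window anchored at the smallest support point is optimal and
\[
\sup_{i}\ \sum_{k=i}^{i+m} p_k \;=\; \sum_{k=1}^{m+1}p_k \;=\; F_X(m+1),
\qquad
\sup_{j}\ \sum_{k=j}^{j+m} q_k \;=\; F_Y(m+1),
\]
where $F_X,F_Y$ are the cdfs (with the obvious relabeling if the left endpoint of the support is not $1$). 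Finally, $X\le_{st}Y$ is equivalent to $F_X\geq F_Y$ pointwise; specializing to $t=m+1$ yields $Q_X(\varepsilon)=F_X(m+1)\geq F_Y(m+1)=Q_Y(\varepsilon)$, and since $\varepsilon>0$ was arbitrary, Definition \ref{def2} gives $X\leq_{\text{wd}}Y$.

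I do not expect a genuine obstacle here: the only point requiring a little care is the first step — checking that an interval of length $\varepsilon$ captures precisely $\lfloor\varepsilon\rfloor+1$ lattice points in the extremal case so that \eqref{lat1} applies with the right value of $m$. Once the concentration functions are identified with $F_X(m+1)$ and $F_Y(m+1)$, the monotonicity of the pmfs pins down the optimal window and the stochastic order closes the argument in one line; so the proposition is essentially a clean corollary of the lattice characterization \eqref{lat1}.
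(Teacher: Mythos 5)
Your proposal is correct and follows essentially the same route as the paper: identify the optimal window as the one anchored at the smallest support point (so that $Q_X(\varepsilon)=F_X(1+\varepsilon)$), and then conclude from $F_X\geq F_Y$ pointwise. You merely spell out two details the paper leaves implicit — the termwise comparison justifying that the leftmost window maximizes the sum, and the bookkeeping relating $\varepsilon$ to $m=\lfloor\varepsilon\rfloor$ — which is a welcome but not substantively different elaboration.
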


\begin{proof}
	We just need to take into account that the supremum is always achieved at $k=1$, i.e.,
	$$
	Q_{X}(\varepsilon) = \sup_{x_0} \mbox{Pr}\{ X\in [x_0, x_0+\varepsilon]\} 
	= F_{X}(1+\varepsilon) = F_{X}(k),  
	$$
	where $k$ is the maximum integer equal to or less than $1+\varepsilon$. The proof then concludes with the assumption that the two distributions are ordered in the stochastic sense. 
\end{proof}

\begin{example} \label{ex:decreasing}
	Let us consider two applications of Proposition \ref{prop-decreasing}.
	\begin{enumerate}
		\item[a)]
		Let $X \sim Geom(p_1)$ and $Y \sim Geom(p_2)$ be two geometric random variables, i.e. the number of trials needed to get one success. By Proposition \ref{prop-decreasing}, $X \leq_{we}  Y$ if and only if $p_1 \geq p_2$.
		\item[b)]
		Here, we consider the logarithmic distribution, which has many applications. For example, it has long been used to model species abundances \citep{fisher1943} and intensities (see, e.g., \citet{williams1964}).
		Let $X$ have a logarithmic distribution with probability mass function 
		$$
		f(k,p)=\frac{-1}{\log(1-p)} \, \frac{p^k}{k}, \quad k\geq 1,
		$$ 
		and let $Y$ be logarithmically distributed with probability mass function $f(k,q), k\geq 1$, where $0<p,q<1$. 
		For the logarithmic distribution, it is easy to see that $X\leq_{lr} Y$, where $\leq_{lr}$ denotes the likelihood ratio order, if $p\leq q$. 
		Since the likelihood ratio order implies the stochastic order (see, e.g., Müller and Stoyan 2002, p. 12), we obtain $X\leq_{st} Y$ if $p\leq q$.
		On the other hand, if $X\leq_{st} Y$, then $EX=-[\log(1-p)]^{-1} p/(1-p)\leq EY$, which implies $p\leq q$. Thus, $X \leq_{st} Y$ if and only if $p\leq q$.
		
		Using Proposition \ref{prop-decreasing}, it immediately follows that $X \leq_{\text{wd}} Y$ if and only if $p\leq q$.
	\end{enumerate}
\end{example}

\section{Relation to the discrete dispersive order of Eberl and Klar} \label{sec:disc-disp}

We begin with some definitions and results necessary to understand the discrete dispersive order of \cite{ek-2024}, which serves as a fundamental order for defining measures of dispersion for discrete distributions in an axiomatic way.

\begin{definition}
	\label{def:PDdists}
	Let $\Ddists$ denote the set of discrete distributions, and let $F \in \Ddists$ be a cdf with probability mass function $f$ and let $X \sim F$. The class $\PDdists \subseteq \Ddists$ is defined by
	\begin{align*}
		F \in \PDdists \Leftrightarrow& \ \supp(F) \text{ is order-isomorphic to a subset of } \Z \text{ with at least two }\\
		& \, \text{ elements} \\
		\Leftrightarrow & \ \exists \text{ a subset } A \subseteq \Z, |A| \geq 2, \text{and a bijection } \varphi: \supp(F) \to A\\
		& \ \text{such that } x \leq y \Leftrightarrow \varphi(x) \leq \varphi(y) \ \forall x, y \in \supp(F).
	\end{align*}
\end{definition}

\begin{proposition}
	\label{thm:IS-char}
	Define $\indsetset = \{\Z, \N, -\N\} \cup \{ \{1, \ldots, n\}: n \in \Nz \}$ and
	\begin{align*}
		\idnseqset_A =& \Big\{ (x_j, p_j)_{j \in A} \subseteq \R \times (0, 1]: x_i < x_j \text{ for } i < j, p_j > 0 \text{ for } j \in A, \sum_{j \in A} p_j = 1 \Big\}\\
		\intertext{for $A \in \indsetset \setminus \Z$ as well as}
		\idnseqset_{\Z} =& \Big\{ (x_j, p_j)_{j \in \Z} \subseteq \R \times (0, 1]: x_i < x_j \text{ for } i < j, p_j > 0 \text{ for } j \in \Z, \sum_{j \in \Z} p_j = 1,\\
		& \; \hphantom{\Big\{ (x_j, p_j)_{j \in \Z} \subseteq \R \times (0, 1]:} \inf\{j \in \Z: \sum_{i \leq j} p_i \geq \tfrac{1}{2}\} = 0 \Big\}.
	\end{align*}
	For any $F \in \PDdists$, there exists a unique index set $A \in \indsetset$ that is order-isomorphic to $\supp(F)$, and there exists a unique sequence 
	$(x_j, p_j)_{j \in A} \in \idnseqset_A$ such that $\Pr(X = x_j) = p_j$ for all $j \in A$.
	This unique association is denoted by $F \cdfis \left( A, (x_j, p_j)_{j \in A} \right)$. $A$ is said to be the {\em indexing set} of $F$ and $(x_j, p_j)_{j \in A}$ is said to be the {\em identifying sequence} of $F$.
\end{proposition}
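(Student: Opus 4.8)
The plan is to prove Proposition \ref{thm:IS-char} by separately establishing existence and uniqueness of the indexing set and of the identifying sequence, treating the case $A = \Z$ with its extra normalization condition separately from the finite and one-sided-infinite cases.

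First I would establish the existence and uniqueness of the indexing set $A \in \indsetset$. Since $F \in \PDdists$, by Definition \ref{def:PDdists} the support $\supp(F)$ is order-isomorphic to a subset of $\Z$ with at least two elements. Such a subset is, as a linearly ordered set, isomorphic to exactly one of the following order types: a finite chain of length $n \geq 2$ (giving $A = \{1, \ldots, n\}$), a chain with a least element and no greatest element (giving $A = \N$), a chain with a greatest element and no least element (giving $A = -\N$), or a chain with neither (giving $A = \Z$). The key combinatorial fact here is that any subset of $\Z$, being locally finite and order-embeddable in $\Z$, realizes one of exactly these four order types; two distinct members of $\indsetset$ are never order-isomorphic, which gives uniqueness of $A$. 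This is essentially a classification-of-countable-discrete-linear-orders argument and is routine.

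Next, given the indexing set $A$, I would construct the identifying sequence. Let $\psi: A \to \supp(F)$ be the (unique) order isomorphism, set $x_j := \psi(j)$ and $p_j := \Pr(X = x_j)$ for $j \in A$. Then $x_i < x_j$ for $i < j$ since $\psi$ is order-preserving, each $p_j > 0$ because $x_j \in \supp(F)$, and $\sum_{j \in A} p_j = 1$ because the $x_j$ enumerate all atoms of the discrete distribution $F$; hence $(x_j, p_j)_{j \in A} \in \idnseqset_A$ in the cases $A \neq \Z$. For uniqueness in these cases: the values $\{x_j : j \in A\}$ must be exactly $\supp(F)$ and must be listed in increasing order, which pins down the enumeration uniquely once $A$ is fixed; and then $p_j = \Pr(X = x_j)$ is forced. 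The main obstacle — and the one genuinely delicate point — is the case $A = \Z$, where the order isomorphism $\psi: \Z \to \supp(F)$ is only unique up to a shift $j \mapsto j + c$, $c \in \Z$. One must check that the normalization $\inf\{j \in \Z : \sum_{i \leq j} p_i \geq \tfrac{1}{2}\} = 0$ selects exactly one such shift. For this I would argue that the map $c \mapsto \inf\{j : \sum_{i \leq j} p_{i+c} \geq \tfrac12\}$ shifts by exactly $-1$ when $c$ increases by $1$ (since shifting the sequence shifts the partial-sum thresholds correspondingly), so there is a unique $c$ making this infimum equal to $0$; one also needs that the infimum is finite (not $-\infty$ or $+\infty$), which follows because $\sum_{i \leq j} p_i \to 1 > \tfrac12$ as $j \to \infty$ and $\to 0 < \tfrac12$ as $j \to -\infty$, so the set $\{j : \sum_{i \leq j} p_i \geq \tfrac12\}$ is a nonempty proper up-set in $\Z$ and hence has a finite minimum.

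Finally I would assemble these pieces: existence of $A$ and of $(x_j, p_j)_{j \in A}$, together with the two uniqueness arguments (uniqueness of $A$ within $\indsetset$, and uniqueness of the sequence within $\idnseqset_A$ using the $\Z$-normalization in that case), yields the claimed unique association $F \cdfis (A, (x_j, p_j)_{j \in A})$. I expect the bookkeeping around the $\Z$-case normalization to be where all the real content lies; everything else is a matter of carefully citing the order-type classification and unwinding definitions.
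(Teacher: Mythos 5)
Your proof is correct. Note that the paper states Proposition \ref{thm:IS-char} without proof, recalling it from \citet{ek-2024}; your argument --- classifying the order type of $\supp(F)$ among the four pairwise non-isomorphic members of $\indsetset$, pulling back the atoms and their masses along the order isomorphism (unique for $A \neq \Z$), and observing that the normalization $\inf\{j \in \Z: \sum_{i \leq j} p_i \geq \tfrac{1}{2}\} = 0$ is shift-equivariant and therefore selects exactly one of the $\Z$-many order isomorphisms in the doubly infinite case --- is the standard route and correctly isolates the $A = \Z$ normalization as the only point requiring real care.
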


In the following, let $F \cdfis \left( A, (x_j, p_j)_{j \in A} \right)$ and $G \cdfis$ $\left( B, (y_j, q_j)_{j \in B} \right)$. Furthermore, we use the conventions $x_a = -\infty$ and $F(x_a) = 0$ for $a < \min A$ as well as $x_a = \infty$ and $F(x_a) = 1$ for $a > \max A$, provided that the minimum and the maximum exist, respectively. \cite{ek-2024} defined the following relations, where $\smi{A}=A \setminus \{\min A\}$.

\begin{definition} \label{def:compj}
	Let $F, G \in \PDdists$.
	\begin{enumerate}
		\item[a)] The relation $\stdCompj$ on the set $A \times B$ is defined by
		\begin{align*}
			a \stdCompj b &\Leftrightarrow (F(x_{a-1}), F(x_a)) \cap (G(y_{b-1}), G(y_b)) \neq \emptyset.
		\end{align*}
		for $a \in A, b \in B$. The set $R(\stdCompj)$ of all $(a, b) \in A \times B$ with $a \stdCompj b$ is said to be the set of $(F, G)$-{\em dispersion-relevant} pairs of indices.
		\item[b)] The relation $\stdCompja$ on the set $\smi{A} \times \smi{B}$ is defined by
		\begin{equation*}
			a \stdCompja b \Leftrightarrow (a \stdCompj b) \land (a-1 \stdCompj b-1)
		\end{equation*}
		for $a \in \smi{A}, b \in \smi{B}$. The set $R(\stdCompja)$ of all $(a, b) \in \smi{A} \times \smi{B}$ with $a \stdCompja b$ is said to be the set of $(F, G)$-{\em $\land$-dispersion-relevant} pairs of indices.
	\end{enumerate}
\end{definition}

\cite{ek-2024} proposed the following discrete dispersive order. 

\begin{definition} \label{def:disco}
	Let $F, G \in \PDdists$. $G$ is said to be {\em at least as $\land$-discretely dispersed} as $F$, denoted by $F \discoa G$, if the following two conditions are satisfied:
	\begin{enumerate}
		\item[(i)] $q_b \leq p_a \quad \forall (a, b) \in R(\stdCompj)$,
		\item[(ii)] $x_a - x_{a-1} \leq y_b - y_{b-1} \quad \forall (a, b) \in R(\stdCompja)$.
	\end{enumerate}
\end{definition}

Now, we can formulate the main result of this section. It shows that for distributions in $\PDdists$, the weak dispersive order is indeed weaker than the discrete dispersive order.

\begin{theorem} \label{th:disc-disp}
	Let $F, G \in \mathcal{D}_0$, where $\mathcal{D}_0$ is defined as in \citet[][Def. 3.1]{ek-2024}. Then, $F \discoa G$ implies $F \leq_{\text{wd}} G$.
	
	\begin{proof}
		In this proof, we utilize the notation from \citet{ek-2024}. In particular, let $F \cdfis \left( A, (x_j, p_j)_{j \in A} \right)$ and $G \cdfis \left( B, (y_j, q_j)_{j \in B} \right)$.
		
		In the following, we will prove that, for all $y^* \in \R$ and all $\varepsilon > 0$, there exists a $x^* \in \R$ such that $F(x^* + \varepsilon) - F(x^*) \geq G(y^* + \varepsilon) - G(y^*)$. This then directly implies $F \leq_{\text{wd}} G$.
		
		Let $y^* \in \R$ and $\varepsilon > 0$. Choose $b, \beta \in B \cup \{\min B - 1\}$ with $b \leq \beta$ such that $y_{b+1} = \sup\{y \in \R: G(y) = G(y^*)\}$ and $y_\beta = \inf\{y \in \R: G(y) = G(y^* + \varepsilon)\}$. Since the case $b = \beta$ is trivial, we assume $b < \beta$. Furthermore, choose $a \in A \cup \{\min A - 1\}$ such that $F(x_a) = \max\left( [0, G(y_b)] \cap F(\R) \right)$. Then, we have $F(x_a) \leq G(y_b) < F(x_{a+1})$.
		
		Let $k_{a+1} = b+1$ and note that $a+1 \stdCompja k_{a+1}$. Now, exactly one of the following two conditions is satisfied:
		\begin{enumerate}
			\item[(i)] $G(y_{b+1}) \leq F(x_{a+1})$,
			\item[(ii)] $G(y_{b+1}) > F(x_{a+1})$. 
		\end{enumerate}
		In case (i) under the condition $F(x_{a+1}) < 1$, we choose $k_{a+2} \in B \cap [k_{a+1}+1, \infty)$ such that $G(y_{k_{a+2}-1}) = \min \nn{a+2}{F}{G}$. This implies $G(y_{k_{a+2}-1}) \leq F(x_{a+1})$ and $a+2 \stdCompja k_{a+2}$. If $F(x_{a+2}) < 1$, we choose $k_{a+3} \in B$ such that $G(y_{k_{a+3}-1}) = \min \nn{a+3}{F}{G}$. According to \citet[][Lemma A.2 in the supplement]{ek-2024}, we have $k_{a+3} > k_{a+2}$. As long as $F(x_{a+i}) < 1, i = 2, 3, \ldots,$ we can keep on choosing $k_{a+i+1} \in B \cap [k_{a+i}+1, \infty)$ such that $G(y_{k_{a+i+1}-1}) = \min \nn{a+i+1}{F}{G}$.
		
		In case (ii), we choose $k_{a+2} = b+2$, which is equivalent to $G(y_{k_{a+2}-1}) = \max \nn{a+2}{F}{G}$. (Note that $G(y_{b+1}) < 1$, because assuming $G(y_{b+1}) = 1$ implies $F(x_{a+2}) = G(y_{b+1}) = 1$ and thus $p_{a+2} < q_{b+1}$ in spite of $a+2 \stdCompj b+1$, which contradicts $F \discoa G$.) This implies $G(y_{k_{a+2}-1}) > F(x_{a+1})$ and $a+2 \stdCompja k_{a+2}$ as well as
		\begin{align*}
			0 &< G(y_{k_{a+2}-1}) - F(x_{a+1})\\
			&= G(y_{b+1}) - F(x_{a+1})\\
			&= (G(y_b) - F(x_a)) + (q_{b+1} - p_{a+1})\\
			&\leq G(y_b) - F(x_a).
		\end{align*}
		Then, we make the same distinction as previously, just increasing the indices by one. Again, exactly one of the following conditions is true:
		\begin{enumerate}
			\item[(i')] $G(y_{b+2}) \leq F(x_{a+2})$,
			\item[(ii')] $G(y_{b+2}) > F(x_{a+2})$. 
		\end{enumerate}
		In case (i') under the condition $F(x_{a+2}) < 1$, we proceed as in case (i), just starting with one index higher, i.e. by choosing $k_{a+3} \in B \cap [k_{a+2}+1, \infty)$ such that $G(y_{k_{a+3}-1}) = \min \nn{a+3}{F}{G}$. In case (ii'), we proceed as in case (ii), just starting with one index higher, i.e. by choosing $k_{a+3} = b+3$.
		
		Overall, we iteratively obtain
		\begin{equation}
			\label{eqn:iterResult1}
			\forall j \in A \cap [a+1, \infty) \; \exists \, \text{pairwise distinct } k_j \in B \cap [b+1, \infty): j \stdCompja k_j. 
		\end{equation}
		In this iteration, we either remain in cases (ii), (ii'), (ii''), \ldots\ or we at some point get into cases (i), (i'), (i''), \ldots\ and remain there from that point on out. Let $\tilde{\beta} = \max\{k \in B \cap (-\infty, \beta]\,|\, \exists j \in A: k = k_j\}$ and define $\alpha \in A$ by $\tilde{\beta} = k_{\alpha}$. Then, (\ref{eqn:iterResult1}) can be made more specific in the following sense:
		\begin{equation*}
			\label{eqn:iterResult2}
			\forall j \in \{a+1, \ldots, \alpha\} \; \exists \, \text{pairwise distinct } k_j \in \{b+1, \ldots, \tilde{\beta}\}: j \stdCompja k_j. 
		\end{equation*}
		It follows that
		\begin{align*}
			x_{\alpha} - x_{a+1} &= \sum_{j=a+2}^{\alpha} x_j - x_{j-1}\\
			&\leq \sum_{j=a+2}^{\alpha} y_{k_j} - y_{k_j-1} \leq \sum_{k=b+2}^{\tilde{\beta}} y_k - y_{k-1}\\
			&= y_{\tilde{\beta}} - y_{b+1} \leq y_\beta - y_{b+1} < \varepsilon.
		\end{align*}
		Choose $\delta > 0$ small enough such that $x^* = x_{a+1} - \delta$ and $x^* + \varepsilon \geq x_\alpha$.
		
		If the pair of $\alpha$ and $k_\alpha = \tilde{\beta}$ is in case (i'\ldots'), then we have $G(y_{\tilde{\beta}-1}) \leq F(x_{\alpha-1}) < G(y_{\tilde{\beta}}) \leq G(y_\beta)$. If $F(x_\alpha) = 1$, then $G(y_\beta) \leq F(x_\alpha)$ obviously holds; if $F(x_\alpha) < 1$, then $G(y_{\beta-1}) < \min \nn{\alpha+1}{F}{G}$ and thus $G(y_\beta) = \min \nn{\alpha+1}{F}{G} \leq F(x_\alpha)$ holds because of the maximality of $\tilde{\beta}$. Overall, it follows that
		\begin{align*}
			F(x^* + \varepsilon) - F(x^*) &\geq F(x_\alpha) - F(x_a) \geq G(y_\beta) - G(y_b) = G(y^* + \varepsilon) - G(y^*),
		\end{align*}
		concluding the proof if the pair of $\alpha$ and $k_\alpha = \tilde{\beta}$ is in case (i'\ldots').
		
		If the pair of $\alpha$ and $k_\alpha = \tilde{\beta}$ is in case (ii'\ldots'), then we have $F(x_{\alpha-1}) < G(y_{\tilde{\beta}-1}) < F(x_\alpha)$ and $G(y_{\tilde{\beta}-1}) - F(x_{\alpha-1}) \leq G(y_b) - F(x_a)$. If $F(x_\alpha) = 1$, then $G(y_\beta) \leq F(x_\alpha)$ obviously holds. If $F(x_\alpha) < 1$, then either $\alpha+1$ and $k_{\alpha+1} > \beta$ are in case (i'\ldots'), which implies $G(y_\beta) \leq G(y_{k_{\alpha+1}-1}) \leq F(x_\alpha)$; or $\alpha+1$ and $k_{\alpha+1} = \beta+1$ are in case (ii'\ldots'), which implies $0 < G(y_{\beta}) - F(x_{\alpha}) \leq G(y_b) - F(x_a)$. Overall, it follows that
		\begin{align*}
			F(x^* + \varepsilon) - F(x^*) &\geq F(x_\alpha) - F(x_a)\\
			&= (G(y_\beta) - G(y_b)) + (G(y_b) - F(x_a)) - (G(y_\beta) - F(x_\alpha))\\
			&\geq G(y_\beta) - G(y_b) = G(y^* + \varepsilon) - G(y^*),
		\end{align*}
		concluding the proof if the pair of $\alpha$ and $k_\alpha = \tilde{\beta}$ is in case (ii'\ldots').
	\end{proof}
\end{theorem}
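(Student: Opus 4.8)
The plan is to establish $F \leq_{\text{wd}} G$ via the pointwise characterization: for every $y^* \in \R$ and $\varepsilon > 0$, produce $x^* \in \R$ with $F(x^* + \varepsilon) - F(x^*) \geq G(y^* + \varepsilon) - G(y^*)$. Taking the supremum over $y^*$ and then over $x^*$ yields the Lévy-concentration inequality of Definition \ref{def2}. The key idea is that $G(y^*+\varepsilon)-G(y^*)$ is a sum of point masses $q_{b+1} + \cdots + q_\beta$ over a block of consecutive $G$-support points lying in an interval of length $\varepsilon$, and we want to exhibit a matching block of $F$-support points whose masses sum to at least as much and which fits inside an interval of length $\varepsilon$. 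Condition (i) of $\discoa$ controls masses and condition (ii) controls gaps; the heart of the argument is to match up the right indices so that both get used.

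First I would normalize: pick $b, \beta$ in $B \cup \{\min B - 1\}$ so that $y_{b+1}$ and $y_\beta$ are the extreme $G$-support points with $G(y_{b+1}) = G(y^*)$, $G(y_\beta) = G(y^*+\varepsilon)$, discard the trivial case $b = \beta$, and anchor an index $a \in A \cup \{\min A - 1\}$ by $F(x_a) = \max([0, G(y_b)] \cap F(\R))$, so that $F(x_a) \leq G(y_b) < F(x_{a+1})$. Next I would build, inductively in $j = a+1, a+2, \ldots$, a strictly increasing sequence $k_j \in B \cap [b+1, \infty)$ with $j \stdCompja k_j$: at each step one of two cases holds — either the running $G$-value $G(y_{b+i})$ has already dropped at or below $F(x_{a+i})$ (a "catch-up" case, where I select $k_{a+i+1}$ as the nearest-neighbor index realizing $\min \nn{a+i+1}{F}{G}$, invoking Lemma A.2 of \cite{ek-2024} to keep the $k_j$ distinct), or it is still above (a "lagging" case, where I just take $k_{a+i+1} = b+i+1$ and track the deficit $G(y_{b+i+1}) - F(x_{a+i+1})$, which stays bounded by $G(y_b) - F(x_a)$ using condition (i) via $q_{b+i+1} \leq p_{a+i+1}$). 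Once one enters the catch-up regime it persists, so the iteration is eventually of one type. Define $\tilde\beta$ as the largest $k_j$ not exceeding $\beta$ and $\alpha$ by $\tilde\beta = k_\alpha$; then each $j \in \{a+1,\ldots,\alpha\}$ is $\stdCompja$-related to a distinct $k_j \in \{b+1,\ldots,\tilde\beta\}$.

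The gap estimate is then a telescoping chain: $x_\alpha - x_{a+1} = \sum_{j=a+2}^{\alpha}(x_j - x_{j-1}) \leq \sum_{j=a+2}^{\alpha}(y_{k_j} - y_{k_j - 1}) \leq \sum_{k=b+2}^{\tilde\beta}(y_k - y_{k-1}) = y_{\tilde\beta} - y_{b+1} \leq y_\beta - y_{b+1} < \varepsilon$, where the first inequality is condition (ii) of $\discoa$ applied to each $\land$-dispersion-relevant pair, and the second uses distinctness of the $k_j$ together with monotonicity of the $y$'s. Hence choosing $x^* = x_{a+1} - \delta$ for small $\delta > 0$ gives $x^* + \varepsilon \geq x_\alpha$, so $F(x^*+\varepsilon) - F(x^*) \geq F(x_\alpha) - F(x_a)$. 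The final step is to show $F(x_\alpha) - F(x_a) \geq G(y_\beta) - G(y_b)$, split by whether the terminal pair $(\alpha, \tilde\beta)$ is in the catch-up or lagging regime: in the catch-up case maximality of $\tilde\beta$ forces $G(y_\beta) \leq F(x_\alpha)$ (treating $F(x_\alpha) = 1$ separately), giving the bound directly; in the lagging case one writes $F(x_\alpha) - F(x_a) = (G(y_\beta) - G(y_b)) + (G(y_b) - F(x_a)) - (G(y_\beta) - F(x_\alpha))$ and uses the maintained deficit bound $G(y_\beta) - F(x_\alpha) \leq G(y_b) - F(x_a)$.

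The main obstacle is the bookkeeping in the inductive construction of the $k_j$: one must verify at every step that $j \stdCompja k_j$ actually holds, that the two cases are genuinely exhaustive and mutually exclusive, that the catch-up regime is absorbing, and — crucially — that the $k_j$ remain pairwise distinct and stay within $B$, which is exactly where Lemma A.2 of \cite{ek-2024} and the definition of the nearest-neighbor set $\nn{\cdot}{F}{G}$ do the work. The rest (the telescoping gap bound and the terminal case analysis) is routine once the index matching is in place.
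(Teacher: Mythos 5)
Your proposal follows essentially the same route as the paper's own proof: the same pointwise reduction, the same anchoring of $a$, $b$, $\beta$, the same inductive construction of strictly increasing indices $k_j$ with $j \stdCompja k_j$ split into the two regimes (with Lemma A.2 of \cite{ek-2024} securing distinctness), the same telescoping gap bound via condition (ii), and the same terminal case analysis using the maintained deficit bound. The bookkeeping you flag as the main obstacle is exactly what the paper carries out in detail, and your outline of it is consistent with that argument.
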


The next example shows that several common dispersion measures do not preserve the weak dispersive order.

\begin{example}
	The interquartile range (IQR) does not preserve the order $\discoa$ \citep[Theorem 4.7]{ek-2024}. Since, by Theorem~\ref{th:disc-disp}, the order $\leq_{\text{wd}}$ is weaker than $\discoa$, it follows that the IQR also does not preserve $\leq_{\text{wd}}$.
	
	Other common measures of dispersion do preserve $\discoa$ \citep{ek-2024}. However, the following three-point distributions illustrate that this is not generally true for $\leq_{\text{wd}}$. Consider two distributions:  
	\[
	p = \{p_0 = 0.6,\ p_1 = 0.2,\ p_2 = 0.2\}, \quad
	q = \{q_0 = 0.3,\ q_1 = 0.5,\ q_2 = 0.2\}.
	\]
	It is straightforward to verify that $p \leq_{\text{wd}} q$.
	On the other hand, the dispersion measures for these distributions yield:
	\begin{itemize}
		\item Standard deviations: $\mathrm{SD}_p = 0.80$, $\mathrm{SD}_q = 0.65$;
		\item Mean absolute deviations from the mean: $\mathrm{MAD}_p = 0.72$, $\mathrm{MAD}_q = 0.54$;
		\item Gini mean differences: $\mathrm{GMD}_p = 0.80$, $\mathrm{GMD}_q = 0.74$.
	\end{itemize}
	All differences $\mathrm{SD}_p - \mathrm{SD}_q$, $\mathrm{MAD}_p - \mathrm{MAD}_q$, and $\mathrm{GMD}_p - \mathrm{GMD}_q$ are strictly positive. Based on these classical measures, the distribution $p$ would be considered to be more dispersed than $q$, despite $p \leq_{\text{wd}} q$.
\end{example}

As the empirical illustrations in Section~\ref{sec6} show, such discrepancies are relatively rare but stem from deeper conceptual differences, which are discussed in further detail in the next section.

\section{New measures of variability based on the concentration function} \label{sec5}

The preceding example highlights an important conceptual distinction: the Lévy concentration function captures concentration rather than deviation from a central location. Traditional dispersion measures such as the standard deviation (SD) and the median absolute deviation (MAD) quantify variability with respect to a fixed location parameter, typically the mean or median. In contrast, the Gini mean difference (GMD) assesses variability by averaging the absolute differences between all pairs of independent copies of the random variable. Formally, for a random variable \( X \), the GMD is defined as:
\[
\mathrm{GMD}_X = \mathbb{E}_{X'}\left[ \mathbb{E}_{X|X'} \left[ |X - X'| \right] \right],
\]
where \( X' \) is an independent copy of \( X \).

In the example under discussion, although all three classical dispersion measures are larger for distribution \( p \) than for distribution \( q \), both distributions share the same support. Thus, the higher concentration of \( p \) indicates a lower level of uncertainty or randomness. From a practical perspective, if one had to predict a likely value, choosing for \( p \) would be more straightforward than for \( q \), suggesting that \( p \) is more ``predictable''. This interpretation is supported by the Shannon entropy values: \( H(p) = -\sum p_i \log_2(p_i) = 1.371 \), while \( H(q) = 1.485 \). The lower entropy for \( p \) reflects reduced uncertainty.

This relationship is underpinned by the fact that the vector $p$ majorizes $q$ (i.e. $q \preceq p$); that is, 
$\sum_{i=1}^k x_{[i]} \geq \sum_{i=1}^k y_{[i]}$ for all $k$, where $x_{[i]}$ denotes the $i$-th largest entry of $x$. 
Since entropy is Schur-concave - that is, $f(q) \geq f(p)$ if $q \preceq p$ - it decreases as the distribution becomes more concentrated. 
Hence, in this example, greater concentration corresponds to lower entropy, providing a compelling case where classical dispersion measures may fail to capture the perceived variability in discrete distributions. This underscores the relevance of concentration-based and entropy-based approaches, particularly when comparing distributions with identical support.

However, concentration can be viewed both as a lower bound for several classical measures of variability - those that quantify deviation from a central point - and as a dispersion measure in its own right. Using the formula  
\[ \mathbb{E}[Z^r] = \int_0^\infty rt^{r-1} \Pr(Z > t)\,dt \]  
for a non-negative random variable $Z$ and $r \geq 1$, we obtain
$$
E[|X-a|^r] \geq r\int_{0}^{\infty}t^{r-1}(1-Q_X(2t))dt, \quad \forall a \in \mathbb{R}, \, \forall r\geq 1.
$$
Therefore, 
$$
E[|X-a_r|^r] \geq \frac{r}{2^r} \int_{0}^{\infty} \varepsilon^{r-1} (1-Q_X(\varepsilon)) \, d\varepsilon, \quad  \forall r \geq 1,
$$
where $a_r = \arg \min\limits_{a} E[|X-a|^r]$, $r\geq 1$. In fact, any functional $\nu_r(X)$ defined as 
$$
\nu_r(X) = \frac{\sqrt[r]{r}}{2} \left ( \int_{0}^{\infty} \varepsilon^{r-1} (1-Q_X(\varepsilon)) d\varepsilon \right )^{1/r}, 
$$
can also be used as a measure of the variability of $X$, preserving the weak dispersive order $\leq_{\text{wd}}$.

To describe the properties of this functional, recall that a measure of variability $\nu$ is a map from the set of random variables to $\mathbb{R}$, such that given a random variable $X,$ $\nu(X)$  quantifies the variability of $X$. Next, we list a number of properties that a measure of variability should reasonably satisfy: 

\begin{itemize}
	\item{P1}  Law invariance: if $X$ and $Y$ have the same distribution, then $\nu(X)=\nu(Y).$ 
	\item{P2} Translation invariance: $\nu(X+k)=\nu(X)$ for all $X$ and all constant $k$.
	\item{P3} Absolute homogeneity: $\nu(\lambda X)= |\lambda| \nu(X)$ for all $X$ and all $\lambda\in\R$.
	\item{P4} Non-negativity: $\nu(X) \geq 0$, with  equality if $X$ is degenerate at $c \in \mathbb{R}$.
	\item{P5} Consistency with some dispersion order $\leq_D$: if $X \leq_{D}Y$, then $\nu(X) \leq \nu(Y)$. 
\end{itemize}

A functional $\nu$ satisfying the properties P1-P5  - using $\leq_{\text{disp}}$ as the dispersion order in P5 - is called a measure of variability or spread in the sense of Bickel and Lehmann; see \cite{BiLe-1979}. It is obvious that $\nu_r(X)$ satisfies P1, P2, and P4. To prove P3, we need only observe that $Q_{\lambda X}(\varepsilon)=Q_{X}(\varepsilon/ |\lambda|)$, from which $\nu_r(\lambda X)=|\lambda |\nu_r(X)$, $\lambda \in \mathbb{R}$, follows. Property P5 - whether using $\leq_{\text{disp}}$ or the discrete dispersive order $\discoa$ - follows directly from Corollary \ref{cordisp} and Theorem \ref{th:disc-disp}, respectively.

Note that for many discrete distributions, $\nu_r(X)$ is not difficult to compute. For a random variable $X$ with values in $\mathbb{N}_0$, we obtain 
\begin{align*}
	\nu_r(X) &= \frac{1}{2} \left ( \sum_{k=0}^{\infty} \left( (k+1)^r - k^r \right) (1-Q_X(k)) \right )^{1/r}.
\end{align*}
and, in particular, $\nu_1(X) = 1/2 \sum_{k=0}^{\infty} (1-Q_X(k))$.
Thus, $\nu_r$ is essentially the sum over the minimum probability outside an interval of length $k$; if this is small, the distribution is highly concentrated and its variability is low. 

\begin{example}
	\begin{enumerate}
		\item[a)]
		If $X\sim Be(p)$, we obtain $\nu_1(X) = \min\{p, 1-p\}/2, \ \nu_2(X) = \sqrt{\min\{p, 1-p\}}/2$. 
		\item[b)]
		Let $X \sim Geom(p)$ with $F_X(k)=1-(1-p)^k, k\in\mathbb{N}$. Then, $1-Q_X(k)=(1-p)^{k+1}$, which entails 
		$\nu_1(X)=(1-p)/(2p),  \ \nu_2(X)=\sqrt{(1-p)(2-p)}/(2p).$
	\end{enumerate}
\end{example}

\begin{definition}
	\begin{enumerate}
		\item[a)] Let $X$ and $Y$ be discrete random variables and let $p$ and $q$ be their probability mass functions (defined on the union of their supports). $X$ is said to be {\itshape less random} than $Y$ (written $X\leq_{rand} Y$) if $q \preceq p$, where $\preceq$ denotes the order of majorization \citep{hickey:1983}. 
		\item[b)] A distribution $p$ with support on the lattice of integers is unimodal if there exists at least one integer $M$ such that 
		$$
		p_k\geq p_{k-1}, \ \forall k\leq M, \qquad   p_{k+1}\leq p_k, \ \forall k\geq M.
		$$
	\end{enumerate}
\end{definition}

Closely related characterizations of discrete unimodality are studied in \cite{berthe1984}. To compare distributions in terms of randomness, \cite{hickey:1984} and \cite{hickey:1986} introduce a notion of majorization for absolutely continuous distributions via the decreasing rearrangement of their density functions. For the absolutely continuous case, \cite{fersur2003} shows that comparison in terms of randomness is equivalent to the weak dispersive order for unimodal distributions, as defined in \cite{sudku1988}. The following proposition serves an analogous purpose for discrete distributions.

\begin{proposition} \label{prob:rand}
	Let $p$ and $q$ be unimodal distributions on the non-negative integers, and let $X$ and $Y$ be distributed according to $p$ and $q$. Then,
	\begin{align*}
		X\leq_{\text{wd}} Y &\Leftrightarrow X\leq_{rand} Y.
	\end{align*}
\end{proposition}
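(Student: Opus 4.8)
The plan is to exploit the fact that for unimodal distributions the L\'evy concentration function is computed by a ``greedy'' window: if $p$ has mode at $M$, then for any window length that corresponds to selecting $m+1$ consecutive lattice points, the maximal accumulated probability $\sup_{i}\sum_{k=i}^{i+m}p_k$ is achieved by a window of consecutive points whose multiset of probabilities is exactly the $m+1$ largest values $p_{[1]}\ge p_{[2]}\ge\cdots\ge p_{[m+1]}$. The reason is that, by unimodality, the decreasing rearrangement of $p$ is realized by reading off values outward from $M$, and a window of consecutive integers containing $M$ (extended far enough in each direction) picks up precisely the top order statistics in order; one checks that no window of $m+1$ consecutive integers can collect more than the top $m+1$ values, because along any such window the probabilities first increase then decrease, so they are dominated termwise after sorting by the global decreasing rearrangement. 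Hence, using the lattice characterization \eqref{lat1}, for unimodal $p$ one gets
\[
Q_X(m) \;=\; \sum_{i=1}^{m+1} p_{[i]}, \qquad m \in \mathbb{N}_0,
\]
and similarly $Q_Y(m)=\sum_{i=1}^{m+1}q_{[i]}$.

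With this identity in hand, the equivalence is almost immediate. First I would handle the direction $X\leq_{rand}Y \Rightarrow X\leq_{\text{wd}}Y$: if $q\preceq p$ then by definition of majorization $\sum_{i=1}^{m+1}p_{[i]}\ge \sum_{i=1}^{m+1}q_{[i]}$ for every $m$, which by the displayed identity says exactly $Q_X(m)\ge Q_Y(m)$ for all $m\in\mathbb{N}_0$; since both concentration functions are right-continuous step functions that are constant on each interval $[m,m+1)$, this gives $Q_X(\varepsilon)\ge Q_Y(\varepsilon)$ for all $\varepsilon>0$, i.e.\ $X\leq_{\text{wd}}Y$. Conversely, if $X\leq_{\text{wd}}Y$, then evaluating at integer $\varepsilon=m$ yields $\sum_{i=1}^{m+1}p_{[i]}\ge\sum_{i=1}^{m+1}q_{[i]}$ for every $m\ge 0$; together with the normalization $\sum_i p_i=\sum_i q_i=1$ (handled as a limit, or by noting partial sums of the decreasing rearrangements both tend to $1$), these are precisely the defining inequalities of $q\preceq p$, so $X\leq_{rand}Y$. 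One should take a moment to address the case of distributions with supports of different (possibly infinite) cardinality: the majorization order here is taken on probability mass functions defined on the union of the supports, padding with zeros, and the decreasing rearrangement / partial-sum inequalities are unaffected by trailing zeros, so the argument goes through unchanged.

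The main obstacle is the combinatorial lemma identifying $Q_X(m)$ with the sum of the $m+1$ largest masses for a unimodal distribution. The inequality $Q_X(m)\le\sum_{i=1}^{m+1}p_{[i]}$ is the content-bearing part: one must argue that any block of $m+1$ consecutive integers has probability sum at most that of the top $m+1$ masses. I would prove this by observing that the restriction of a unimodal sequence to a set of consecutive integers is itself unimodal, hence its sorted values are a sub-multiset obtained by termwise domination under the global decreasing rearrangement (this is where unimodality is essential—without it a single scattered window could grab many large masses that are not consecutive in the rearrangement, yet the window itself is still a valid competitor). The reverse inequality $Q_X(m)\ge\sum_{i=1}^{m+1}p_{[i]}$ requires exhibiting one admissible window attaining the top sum: starting from the mode $M$ and extending to a window $[M-s,\,M+t]$ with $s+t=m$ chosen greedily (at each step append the larger of the two boundary-adjacent masses $p_{M-s-1}$ vs $p_{M+t+1}$), unimodality guarantees the appended values are non-increasing and exhaust the top order statistics. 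Once this lemma is established, everything else is bookkeeping about step functions and the definition of majorization.
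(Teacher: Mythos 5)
Your proposal is correct and takes essentially the same route as the paper: the paper's entire proof consists of stating the identity $Q_X(m)=p_{[0]}+\cdots+p_{[m]}$ for unimodal lattice distributions and noting that the assertion follows, which is exactly the key lemma you establish and then combine with the definition of majorization. One minor remark: you have the roles of the two inequalities reversed --- the upper bound $Q_X(m)\le\sum_{j=0}^{m}p_{[j]}$ is trivial for \emph{any} distribution (a sum of at most $m+1$ masses never exceeds the sum of the $m+1$ largest), and unimodality is needed only for attainability, i.e.\ for your greedy window around the mode --- but both halves of your argument are valid as written.
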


\begin{proof}
	By the unimodality assumption, 
	\begin{align*}
		\max_{i\in\mathbb{N}_0} \left\{ \sum_{k=i}^{i+m} p_k \right\} &= p_{[0]}+\ldots+p_{[m]},
	\end{align*}
	where $p_{[0]}\geq p_{[1]} \geq \ldots$ are the components of $p$ in decreasing order, and the assertion follows.
\end{proof}

\begin{remark} \label{rem:unimod}
	\begin{enumerate}
		\item[a)] With the help of Prop. \ref{prob:rand}, one can recover the results in Example \ref{ex:convolution} a)-c) and Example \ref{ex:decreasing} a), since it is known that the corresponding distributions are ordered with respect to randomness \citep{hickey:1983}.
		\item[b)] 
		Consider two unimodal distributions on the non-negative integers with $p_{[i]}=q_{[i]}, i\in\mathbb{N}_0$.
		Then, $X=_{rand} Y$, and $X =_{wd} Y$ follows by Prop. \ref{prob:rand}. Hence, all unimodal distributions with the same ordered probabilities are equivalent with respect to $=_{wd}$.
		\item[c)] 
		Of course, without the unimodality assumption, the orders $\leq_{rand}$ and $\leq_{\text{wd}}$ are not equivalent. Consider the distributions $X \equiv \{p_0=0.1, p_1=0.4, p_2=0.4, p_3=0.1\}$ and $Y \equiv \{q_0=0.4, q_1=0.1, q_2=0.1, q_3=0.4\}$. It is clear that $X=_{rand} Y$, but $X \leq_{\text{wd}} Y$ while $Y \not \leq_{\text{wd}} X$.
		\item[d)]
		\begin{sloppypar}
			Consider the class of unimodal distributions whose support is a subset of $\{1,\ldots,M\}$. 
			In this class, the discrete uniform distribution $U\{1,M\}$ is the unique maximizer with respect to $\nu_1$, with 
		\end{sloppypar}
		\begin{align*}
			\nu_1 &= \frac{1}{2} \sum_{k=0}^{M-2} (1-Q_X(k)) 
			= \frac{1}{2} \sum_{k=1}^{M-1} \left(1-\frac{k}{M} \right) = \frac{M-1}{ 4}.
		\end{align*}
		Without the unimodality assumption, there is no maximal element: the two-point distribution with mass $1/2$ at $1$ and $M$ and the uniform distribution are not comparable with respect to $\leq_{\text{wd}}$. Note that for the latter distribution $\nu_1=(M-1)/4$ holds, as it does for the uniform distribution.
		\item[e)]
		The measure $\nu_r$ is finite if $\mathbb{E}|X|^r < \infty$, i.e., it is not robust with respect to outliers. 
		However, it is straightforward to define robust dispersion measures based on a similar construction. For a random variable $X$ taking values in $\mathbb{N}_0$, consider the measure  
		\[
		\nu_{\text{rob}}(X) = \sum_{k=0}^{\infty} \frac{1 - Q_X(k)}{1+k^2},
		\]
		which satisfies properties P1, P2, P4, and property P5 with respect to $\leq_{\text{disp}}$, $\discoa$ and $\leq_{\text{wd}}$ (note that P3 is not meaningful for lattice distributions). 
		
		Let $x_1, \ldots, x_n$ be an independent sample from $X$, and define $M = \max\{x_1, \ldots, x_n\}$. Then, $1 - Q(k) = 0$ for all $k \geq M$. Suppose now that a new observation $x>M$ is added. The empirical influence function satisfies  
		\[
		n\left(\nu_{\text{rob}}(x_1, \ldots, x_{n-1}, x) - \nu_{\text{rob}}(x_1, \ldots, x_n)\right) 
		= c + n \sum_{k=M}^x \frac{1/n}{1+k^2} \leq c + \sum_{k=0}^{\infty} \frac{1}{1+k^2},
		\]
		for some constant $c$. That is, the empirical influence function is bounded, demonstrating the robustness of $\nu_{\text{rob}}$ with respect to outliers.
	\end{enumerate}
\end{remark}

\section{Empirical illustrations} \label{sec6}

\subsection{Counts of swimbladder nematodes in Japanese eels} \label{ex1}
The first example compares counts of larvae of swimbladder nematodes in two populations of the Japanese eel ({\itshape Anguilla japonica}) from southwest Taiwan. Münderle et al. (2006) compared wild eels from the Kao-Ping River (sample 2, $n=168$) with cultured eels from an adjacent aquaculture farm (sample 1, $n=71$). All recorded nematodes belong to the species {\itshape Anguillicoloides crassus}.
Figure \ref{fig:ex1} shows bar plots of the relative frequencies $p$ and $q$ for the two datasets. Table \ref{tab:ex1} presents the absolute frequencies $h_i (i=1,2)$ alongside the relative frequencies for both datasets.
We see that $p_0<q_0$,but $p_1>q_1$, $p_2>q_2$, etc., and it is not possible to deduce anything regarding the weak dispersive ordering between the two samples based directly on these values.
\begin{table}[ht]
	\setlength{\tabcolsep}{2pt}
	\centering
	\begin{tabular}{rrrrrrrrrrrrrrr}  \hline
		$k$ & 0 & 1 & 2 & 3 & 4 & 5 & 6 & 7 & 8 & 12 & 16 & 21 & 42 & 64 \\  \hline
		$h_1$ & 32 & 15 & 8 & 4 & 1 & 1 & 3 & 2 & 1 & 0 & 1 & 1 & 1 & 1 \\ 
		$p$ & 0.45 & 0.21 & 0.11 & 0.06 & 0.01 & 0.01 & 0.04 & 0.03 & 0.01 & 0.00 & 0.01 & 0.01 & 0.01 & 0.01 \\ \hline
		$h_2$ & 134 & 19 & 9 & 0 & 1 & 2 & 0 & 1 & 1 & 1 & 0 & 0 & 0 & 0 \\ 
		$q$ & 0.80 & 0.11 & 0.05 & 0.00 & 0.01 & 0.01 & 0.00 & 0.01 & 0.01 & 0.01 & 0.00 & 0.00 & 0.00 & 0.00 \\ \end{tabular}
	\caption{Absolute frequencies $h_1$ and $h_2$ and relative frequencies $p$ and $q$ for the datasets in Example \ref{ex1} \label{tab:ex1} }
\end{table}

\begin{figure}[ht]
	\begin{center}
		\includegraphics[width=\textwidth]{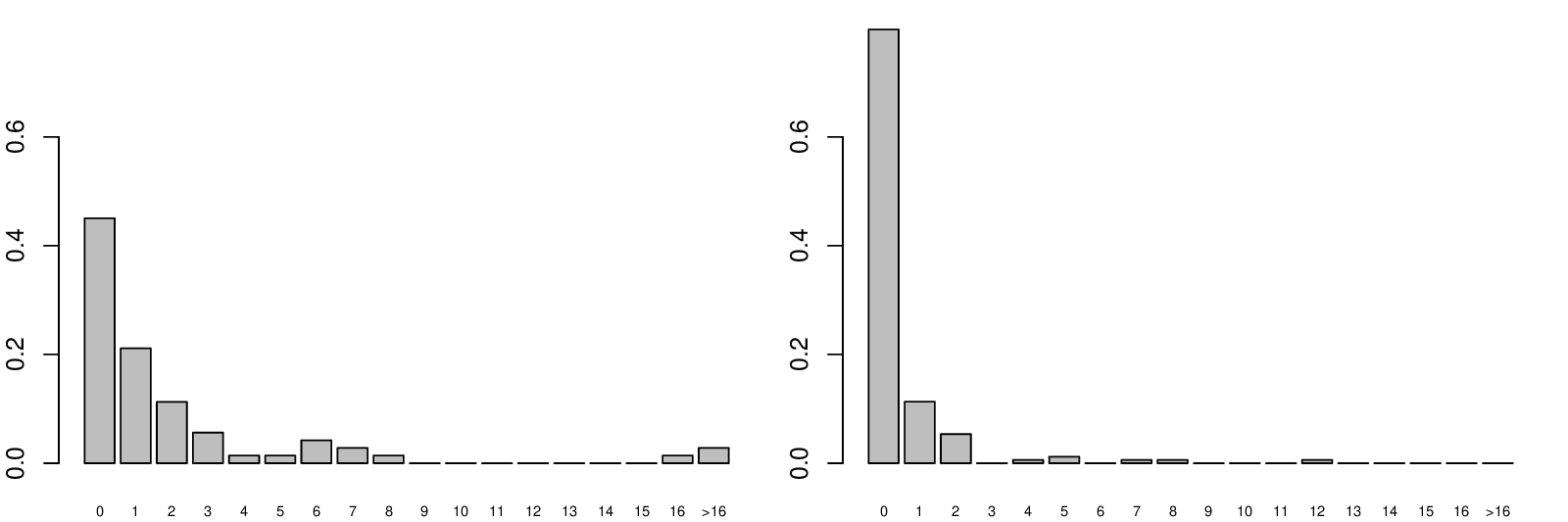}
	\end{center}
	\caption{Bar plots of the relative frequencies for the datasets \\ in Example \ref{ex1} \label{fig:ex1} }
\end{figure}

\smallskip
{\bfseries Findings:} For these two samples, both $X >_{st} Y$ and $X >_{wd} Y$ hold. Consequently, for any location measure $\mu$, we have $\mu(X) \geq \mu(Y)$. Similarly, for the dispersion measures $\nu_r$ and $\widehat{\nu}_{\text{rob}}$ introduced in Section \ref{sec5}, the inequality $\nu(X) \geq \nu(Y)$ also holds.  
Since the two samples are not comparable with respect to $\discoa$, no definitive conclusion can be drawn regarding the ordering of SD, MAD, and GMD. Nevertheless, Table \ref{tab:cex1} shows that even for these classical measures, as well as for the interquartile range, the inequality $\nu(X) \geq \nu(Y) $ still holds:

\begin{table}[ht]
	{\smallskip\centering
		\begin{tabular}{c|ccccccc}
			& SD & MAD & GMD & IQR & $\widehat{\nu}_1$ & $\widehat{\nu}_2$ & $\widehat{\nu}_{\text{rob}}$ \\ \hline
			sample 1 & 9.39 & 4.29 & 5.48 & 2 & 1.65 & 4.95 & 0.90 \\
			sample 2 & 1.43 & 0.74 & 0.83 & 0 & 0.23 & 0.75 & 0.51
		\end{tabular}
		\par}
	\caption{Dispersion measures for the datasets in Example \ref{ex1} \label{tab:cex1} }
\end{table}

\bigskip
The following two examples consider swimbladder and intestinal nematodes in European eels ({\itshape Anguilla anguilla}). The differences between the two distributions are rather small in Example \ref{ex2} and large in Example \ref{ex3}.

\subsection{Counts of swimbladder nematodes in European eels} \label{ex2}
As second example, we consider counts of larvae of swimbladder nematodes in European eels from two different locations: sample 1 (n = 196) from the River Rhine near Karlsruhe, and sample 2 ($n=100$) from the River Rhine near Sulzbach \citep{munderle2006}. Again, all recorded nematodes were A. crassus. \cite{klar:2010} concluded that both samples can be assumed to come from the same distribution. For example, a two-sample Kolmogorov-Smirnov test yields a $p$-value of 0.8.
Figure \ref{fig:ex2} displays bar plots of the relative frequencies $p$ and $q$ for the two datasets. Table \ref{tab:ex2} presents the absolute frequencies $h_i \, (i=1,2)$ alongside the relative frequencies for both datasets.
\begin{table}[ht]
	\setlength{\tabcolsep}{2pt}
	\centering
	\begin{tabular}{rrrrrrrrrrrrrrrrrr}  \hline
		$k$ & 0 & 1 & 2 & 3 & 4 & 5 & 6 & 7 & 8 & 9 & 10 & 11 & 12 & 13 & 16 & 22 & 23 \\ \hline
		$h_1$ & 104 & 47 & 16 & 13 & 5 & 3 & 2 & 1 & 0 & 1 & 0 & 0 & 0 & 1 & 1 & 1 & 1 \\ 
		$p$ & 0.53 & 0.24 & 0.08 & 0.07 & 0.03 & 0.01 & 0.01 & 0.00 & 0.00 & 0.00 & 0.00 & 0.00 & 0.00 & 0.00 & 0.00 & 0.00 & 0.00 \\ \hline
		$h_2$ & 61 & 16 & 10 & 1 & 2 & 1 & 1 & 1 & 2 & 0 & 2 & 2 & 1 & 0 & 0 & 0 & 0 \\ 
		$q$ & 0.61 & 0.16 & 0.10 & 0.01 & 0.02 & 0.01 & 0.01 & 0.01 & 0.02 & 0.00 & 0.02 & 0.02 & 0.01 & 0.00 & 0.00 & 0.00 & 0.00 \\ 
	\end{tabular}
	\caption{Absolute frequencies $h_1$ and $h_2$ and relative frequencies $p$ and $q$ for the datasets in Example \ref{ex2} \label{tab:ex2}}
\end{table}

\begin{figure}[ht]
	\begin{center}
		\includegraphics[width=\textwidth]{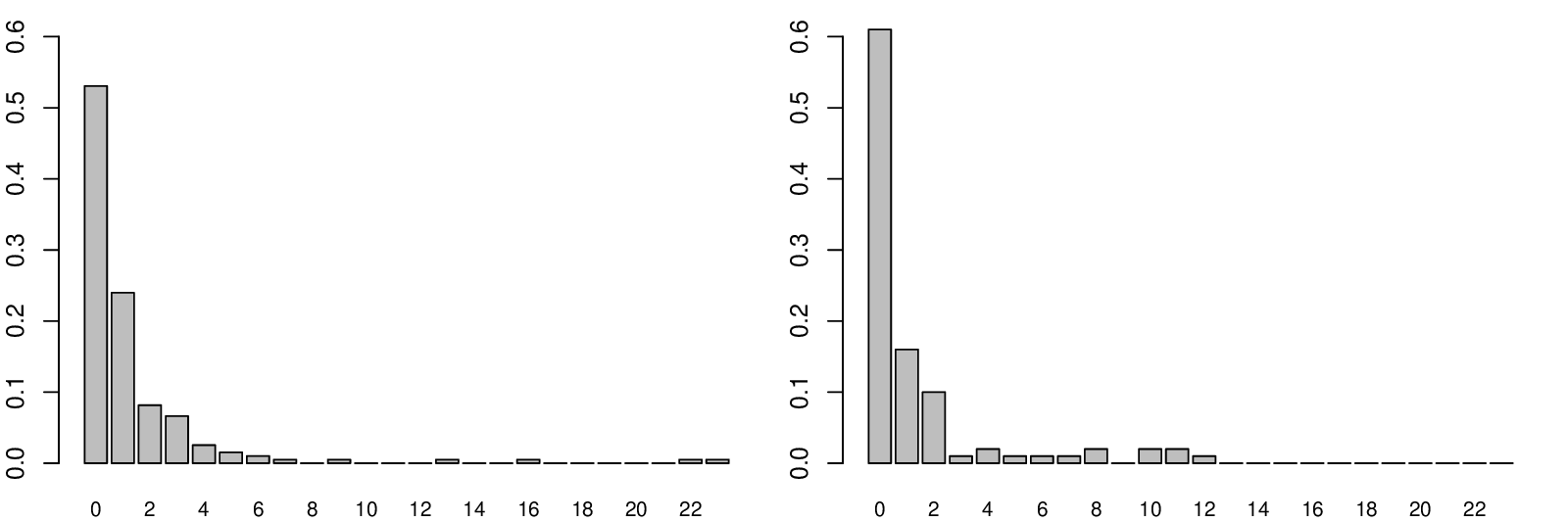}
	\end{center}
	\caption{Bar plots of the relative frequencies for the datasets \\ in Example \ref{ex2} \label{fig:ex2}}
\end{figure}

\medskip
{\bfseries Findings:} Under the assumption that both samples originate from the same distribution, one would expect the differences  
$$
d_m = \sup_{i\in \mathbb{N}_0} \left\{ \sum_{k=i}^{i+m} p_k \right\} -
\sup_{j\in \mathbb{N}_0} \left\{ \sum_{k=j}^{j+m} q_k \right\}
$$  
to be close to zero for all $m$, which is indeed the case. Since the sign of $d_m$ varies with $m$, the samples are not ordered with respect to $\leq_{\text{wd}}$ and, consequently, also not with respect to $\discoa$. Similarly, the differences in the dispersion measures between the two samples (see Table \ref{tab:cex2}) are small and also vary in sign:
\begin{table}[ht]
	{\smallskip\centering
		\begin{tabular}{c|ccccccc}
			& SD & MAD & GMD & IQR & $\widehat{\nu}_1$ & $\widehat{\nu}_2$ & $\widehat{\nu}_{\text{rob}}$ \\ \hline
			sample 1 & 2.94 & 1.53 & 1.99 & 1 & 0.65 & 1.61 & 0.79 \\
			sample 2 & 2.73 & 1.76 & 2.18 & 1 & 0.68 & 1.52 & 0.75
		\end{tabular}
		\par}
	\caption{Dispersion measures for the datasets in Example \ref{ex2} \label{tab:cex2} }
\end{table}

\subsection{Counts of intestinal parasites in European eels} \label{ex3}
The third example compares two samples of counts of intestinal parasites in European eels at the same location (River Rhine near Karlsruhe), but from different years. Sample 1 ($n=40$) was recorded in summer 1999, and sample 2 ($n=20$) in summer 2005 \citep{thielen2006}. The two samples come from quite different distributions, see \cite{klar:2010} for an explanation.
Figure \ref{fig:ex3} displays bar plots of the relative frequencies $p$ and $q$ for the two datasets. Table \ref{tab:ex3} presents the absolute frequencies $h_i (i=1,2)$ alongside the relative frequencies for both datasets.
\begin{table}[ht]
	\setlength{\tabcolsep}{2pt}
	\centering
	\small
	\begin{tabular}{rrrrrrrrrrrrrrrrrrrrrr} \hline
		$k$ & 0 & 1 & 2 & 3 & 4 & 5 & 6 & 7 & 8 & 9 & 10 & 11 & 12 & 14 & 27 & 37 & 39 & $\geq 40$ \\ 
		\hline
		$h_1$ & 14 & 3 & 5 & 4 & 2 & 1 & 1 & 1 & 0 & 0 & 1 & 1 & 0 & 0 & 1 & 1 & 1 & 4 \\ 
		$p$ & 0.35 & 0.07 & 0.12 & 0.10 & 0.05 & 0.02 & 0.02 & 0.02 & 0.00 & 0.00 & 0.02 & 0.02 & 0.00 & 0.00 & 0.02 & 0.02 & 0.02 & 0.10 \\ 
		$h_2$ & 1 & 0 & 2 & 1 & 3 & 2 & 1 & 0 & 1 & 2 & 0 & 2 & 2 & 1 & 1 & 0 & 1 & 0 \\ 
		$q$ & 0.05 & 0.00 & 0.10 & 0.05 & 0.15 & 0.10 & 0.05 & 0.00 & 0.05 & 0.10 & 0.00 & 0.10 & 0.10 & 0.05 & 0.05 & 0.00 & 0.05 & 0.00 \\ 
	\end{tabular}
	\caption{Absolute frequencies $h_1$ and $h_2$ and relative frequencies $p$ and $q$ for the datasets in Example \ref{ex3} \label{tab:ex3}}
\end{table}

\begin{figure}[ht]
	\begin{center}
		\includegraphics[width=\textwidth]{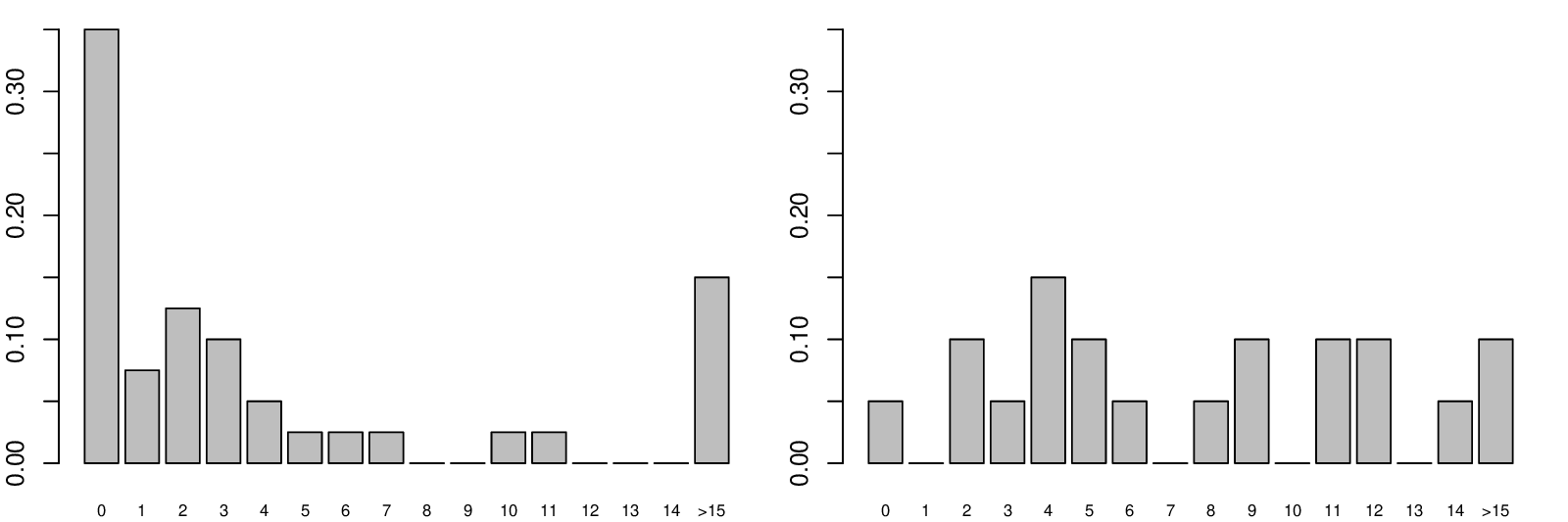}
	\end{center}
	\caption{Bar plots of the relative frequencies for the datasets \\ in Example \ref{ex3} \label{fig:ex3} }
\end{figure}

\smallskip
{\bfseries Findings:} Here, $d_m > 0$ for $m = 0, \ldots, 10$, while $d_m < 0$ for $m = 12, \ldots, 173$. Consequently, neither $X \leq_{\text{wd}} Y$ nor $Y \leq_{\text{wd}} X$ holds. Table \ref{tab:cex3} shows that the standard deviation, mean absolute deviation, and Gini mean difference are clearly larger for the first sample; the same applies to $\widehat{\nu}_1$ and $\widehat{\nu}_2$. In contrast, the differences in the interquartile ranges and the robust measure $\widehat{\nu}_{\text{rob}}$ are negative.

\begin{table}[ht]
	{\smallskip\centering
		\begin{tabular}{c|ccccccc}
			& SD & MAD & GMD & IQR & $\widehat{\nu}_1$ & $\widehat{\nu}_2$ & $\widehat{\nu}_{\text{rob}}$ \\ \hline
			sample 1 &35.83 &21.48 &25.68 & 6.25 & 7.60 &19.25 & 1.06 \\
			sample 2 & 9.19 & 6.06 & 8.93 & 7.25 & 4.07 & 6.23 & 1.23
		\end{tabular}
		\par}
	\caption{Dispersion measures for the datasets in Example \ref{ex3} \label{tab:cex3}}
\end{table}

\subsection{Amount of aggression attributed to film characters} \label{ex4}
To illustrate their test procedure, \cite{sigtuk1960} considered data from a study comparing the amount of aggression attributed to film characters by members of two populations, A and B, both with a sample size of $9$. Based on the test result, they concluded that population A is more spread out than population B.
Figure \ref{fig:ex4} displays bar plots of the relative frequencies $p$ and $q$ for the two datasets. Table \ref{tab:ex4} presents the absolute frequencies $h_i (i=1,2)$ alongside the relative frequencies for both datasets.
\begin{table}[ht]
	\setlength{\tabcolsep}{2pt}
	\centering
	\begin{tabular}{rrrrrrrrrrrrrrrr} \hline
		$k$ & 0 & 3 & 5 & 6 & 8 & 10 & 11 & 12 & 13 & 14 & 15 & 16 & 17 & 19 & 25 \\ \hline
		$h_1$ & 1 & 0 & 1 & 0 & 2 & 0 & 0 & 0 & 0 & 1 & 1 & 0 & 1 & 1 & 1 \\ 
		$p$ & 0.11 & 0.00 & 0.11 & 0.00 & 0.22 & 0.00 & 0.00 & 0.00 & 0.00 & 0.11 & 0.11 & 0.00 & 0.11 & 0.11 & 0.11 \\ \hline 
		$h_2$ & 0 & 1 & 0 & 1 & 0 & 2 & 1 & 1 & 2 & 0 & 0 & 1 & 0 & 0 & 0 \\ 
		$q$ & 0.00 & 0.11 & 0.00 & 0.11 & 0.00 & 0.22 & 0.11 & 0.11 & 0.22 & 0.00 & 0.00 & 0.11 & 0.00 & 0.00 & 0.00 \\ 
	\end{tabular}
	\caption{Absolute frequencies $h_1$ and $h_2$ and relative frequencies $p$ and $q$ for the datasets in Example \ref{ex4} \label{tab:ex4} }
\end{table}

\begin{figure}[ht]
	\begin{center}
		\includegraphics[width=\textwidth]{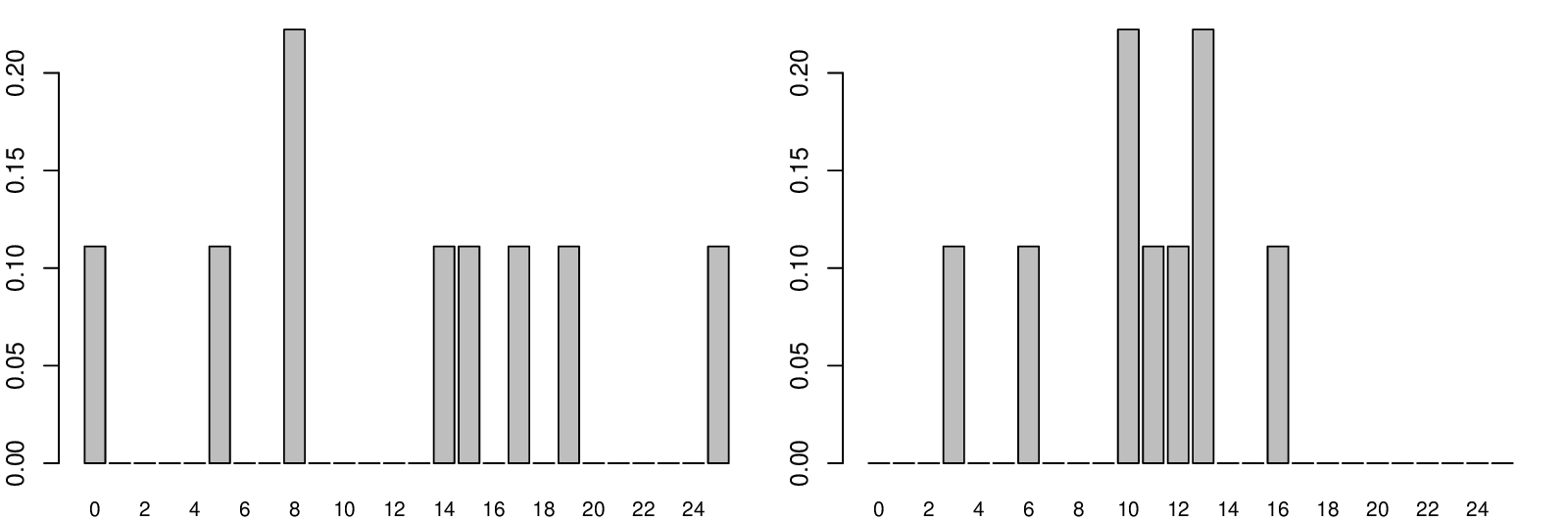}
	\end{center}
	\caption{Bar plots of the relative frequencies for the datasets \\ in Example \ref{ex4} \label{fig:ex4}}
\end{figure}

\smallskip
{\bfseries Findings:} As in the first example, $X >_{wd} Y$ holds. Therefore, we have $\nu(X) \geq \nu(Y)$ for the dispersion measures introduced in Section~\ref{sec5}. Once again, the two samples are not comparable with respect to $\discoa$, so no conclusion can be drawn regarding the other measures. However, Table \ref{tab:cex4} shows that, in this case as well, the inequality $\nu(X) > \nu(Y)$ holds for all measures considered:
\begin{table}[ht]
	{\smallskip\centering
		\begin{tabular}{c|ccccccc}
			& SD & MAD & GMD & IQR & $\widehat{\nu}_1$ & $\widehat{\nu}_2$ & $\widehat{\nu}_{\text{rob}}$ \\ \hline
			sample 1 & 7.75 & 6.30 & 9.28 & 9 & 4.78 & 6.28 & 1.23 \\
			sample 2 & 3.91 & 2.84 & 4.50 & 3 & 2.11 & 3.02 & 1.14
		\end{tabular}
		\par}
	\caption{Dispersion measures for the datasets in Example \ref{ex4} \label{tab:cex4} }
\end{table}

\smallskip
\begin{remark}
	\cite{ek-2024} present in Examples 2.7a) and 4.9a) two real-world datasets for which $X \discoa Y$ holds. Consequently, the classical dispersion measures SD, MAD, and GMD as well as the dispersion measures introduced in Section~\ref{sec5}, agree in their ranking: $\nu(X) \leq \nu(Y)$. In this case, we can confidently state that $Y$ exhibits higher variability than $X$. However, such a clear ordering is rarely encountered in real-world data.
	
	Subsections~\ref{ex1} and~\ref{ex4} provide examples where \( X >_{wd} Y \) holds. As a result, we have \( \nu(X) \geq \nu(Y) \) for the dispersion measures introduced in Section~\ref{sec5}. Thus, if we agree on using the order \( \leq_{\text{wd}} \) to assess dispersion, we can conclude that the variability of \( X \) is greater than that of \( Y \). Even though this ordering does not directly imply a relationship for classical dispersion measures, the examples show that these measures typically follow the same trend in such cases.
	
	In contrast, the example in subsection~\ref{ex3} demonstrates that relying on a specific dispersion measure can be misleading, even when the observed differences appear substantial - for instance, standard deviations of 36 and 9. Since the samples are not ordered with respect to \( \leq_{\text{wd}} \), neither sample should be deemed more variable than the other in a general sense.
\end{remark}

\section{Concluding remarks} \label{sec7}
This paper contributes to the ongoing development of dispersion concepts in statistics by proposing a weak dispersive order that is specifically designed for discrete distributions. This new order overcomes the limitations of existing approaches, particularly the classical dispersive order, which is overly restrictive in discrete settings due to its support-inclusion requirement. It also addresses the limitations of the discrete adaptation of this order introduced by \cite{ek-2024}, which is useful as a foundational concept but imposes strict conditions that are often difficult to verify. The proposed weak dispersive order is more flexible while preserving key interpretability features, and it enables meaningful comparisons of variability between a broader class of discrete distributions.

In addition to the theoretical framework, we introduced a family of variability measures derived from the Lévy concentration function. These measures satisfy the classical axioms of dispersion formulated by Bickel and Lehmann, offering robust and interpretable alternatives to standard dispersion measures in the discrete domain. Several empirical examples demonstrate the practical usefulness of the proposed ordering and associated measures, even in situations where classical dispersion measures may yield conflicting or unintuitive results.

There are several areas that could be explored in future research. For example, a detailed characterization of the equivalence classes under the weak dispersive order for non-unimodal distributions appears to be a challenging and interesting problem. Furthermore, the constructive form of the weak dispersive order - based on comparing probabilities over intervals of bounded length - suggests a natural pathway for extending the concept to multivariate discrete settings. One possibility would be to replace one-dimensional intervals with Euclidean balls or other metric-based neighborhoods, thereby enabling comparisons of spatially distributed count data, such as point patterns in ecology or materials science.

An additional point of interest lies in the relationship between the concepts of concentration and variability - where variability, in a narrower sense, refers to the spread around a central location. Both notions aim to capture aspects of dispersion; however, the examples suggest that standard measures of variability generally do not respect the weak dispersive order. By contrast, we have seen that concentration can serve as a lower bound for certain variability measures. This raises the following questions: Is there a more rigorous framework that formally clarifies the similarities and differences between these two concepts? For pairs of distributions ordered in variability according to weaker criteria than the dispersive order, such as the convex or dilation orders (see, e.g., \cite{shaked-sh-2006}), can we ensure that they are also ordered in concentration?

In summary, this work broadens the conceptual and methodological foundations for comparing variability in discrete data. It provides both theoretical insights and practical tools for statistical analysis in applied fields.

\bibliographystyle{plainnat}
\bibliography{bibliography}
\end{document}